\crefname{hypothesis}{Hypothesis}{Hypotheses}
\title{Convergence of regularized particle filters for stochastic reaction networks \thanks{Submitted to the editors DATE.\funding{This work was funded by the Swiss National Science Foundation under grant number 182653.}}}
\author{Zhou Fang\thanks{Department of Biosystems Science and Engineering, ETH Zurich, Mattenstrasse 26, 4058 Basel, Switzerland
  (\email{zhou.fang@bsse.ethz.ch}, \email{ankit.gupta@bsse.ethz.ch}, \email{mustafa.khammash@bsse.ethz.ch})}
 \and Ankit Gupta\footnotemark[2]
 \and Mustafa Khammash\footnotemark[2]
}
\def\dd{\text{d}}
\def\dd{\text{d}}
\begin{document}

\maketitle

\begin{abstract}
Filtering for stochastic reaction networks (SRNs) is an important problem in systems/synthetic biology aiming to estimate the state of unobserved chemical species.
A good solution to it can provide scientists valuable information about the hidden dynamic state and enable optimal feedback control. 
Usually, the model parameters need to be inferred simultaneously with state variables,
and a conventional particle filter can fail to solve this problem accurately due to sample degeneracy.
In this case, the regularized particle filter (RPF) is preferred to the conventional ones, as the RPF can mitigate sample degeneracy by perturbing particles with artificial noise.
However, the artificial noise introduces an additional bias to the estimate, and, thus, it is questionable whether the RPF can provide reliable results for SRNs.
In this paper, we aim to identify conditions under which the RPF converges to the exact filter in the filtering problem determined by a bimolecular network. 
First, we establish computationally efficient RPFs for SRNs on different scales using different dynamical models, including the continuous-time Markov process, tau-leaping model, and piecewise deterministic process. 
Then, by parameter sensitivity analyses, we show that the established RPFs converge to the exact filters if all reactions leading to an increase of the molecular population have linearly growing propensities and some other mild conditions are satisfied simultaneously.
This ensures the performance of the RPF for a large class of SRNs, and several numerical examples are presented to illustrate our results.
\end{abstract}

\begin{keywords}
Regularized particle filters, stochastic reaction networks, multiscale systems, filtering theory.
\end{keywords}

\begin{AMS}
     60J22, 62M20, 65C05, 92-08, 93E11.
\end{AMS}

\section{Introduction}

The development of fluorescence technologies \cite{zhang2002creating} and modern microscopes \cite{stephens2003light,vonesch2006colored} in the past few decades has greatly improved scientists' ability to study various biological problems at the single-cell level \cite{locke2009using}.
For instance, researchers can now use fluorescent data to investigate the variability in single-cell gene expression \cite{rosenfeld2005gene}, and the role of transcription factors in shaping bursty dynamics \cite{rullan2018optogenetic}. 
Despite these successes, a time-lapse microscope can only follow a few genes over time due to the limited availability of distinguishable reporters \cite{locke2009using};
consequently, many chemical species in the cell are not directly tracked.
This drawback greatly limits scientists' ability to further investigate and control the dynamical behaviors of single-cell systems.
Consequently, it is of immediate interest to establish efficient stochastic filters that can accurately infer the unobserved species in an intracellular reaction system by partial observations.
 
The filtering theory that aims to calculate the posterior of hidden dynamic states given the time-course observation has been extensively developed in the past few decades.
In the linear and Gaussian scenarios, the solution to the filtering problem is the well-known Kalman filter \cite{kalman1960new}, which can be calculated exactly in a computationally efficient way.
In contrast, the nonlinear and non-Gaussian systems usually lead to infinite-dimensional filters that cannot be calculated explicitly \cite{bain2008fundamentals}.
In this case, the particle filter (PF) originally introduced in \cite{gordon1993novel} is the most effective method to numerically solve the filtering problem, which approximates the posterior by a population of weighted samples generated by importance sampling (from the target dynamics) and resampling \cite{doucet2009tutorial}.
So far, PFs have been successfully applied to biochemical reaction systems under different problem settings, e.g., the heat shock response system \cite{liu2012state}, metabolic dynamics \cite{yang2007vivo}, transcriptional switch models \cite{hey2015stochastic}, multiscale reaction systems \cite{fang2021stochastic,fang2020stochastic}, biochemical processes of noise-free observations \cite{rathinam2020state}.
Moreover, the method has been shown to outperform other stochastic filters (e.g., extended Kalman filters and unscented Kalman filter) in general situations \cite{liu2012state}.

Usually, an intracellular reaction system is modeled by a continuous-time Markov Chain, also known as the stochastic reaction network (SRN), due to  low molecular counts \cite{mcadams1997stochastic}.
Its dynamics can be exactly simulated by the Gillespie stochastic simulation algorithm \cite{gillespie1976general,gillespie1977exact} or the next reaction method \cite{gibson2000efficient}, whose computational complexity is proportional to the rate of the fastest reaction.
Commonly, an SRN has a multiscale nature, meaning that the concentration levels of the involved species differ by several orders of magnitude, and so do the reaction rates.
Consequently, the exact simulation of an intracellular reaction system can take an impractical amount of computational time, and the associated PF, as a simulation-based algorithm, can be computationally inefficient.
Accelerating algorithms for SRNs include the linear noise approximation \cite{van1976expansion}, chemical Langevin equation \cite{gillespie2000chemical} , tau-leaping algorithm \cite{gillespie2001approximate,cao2006efficient,rathinam2003stiffness}, piecewise deterministic Markov process (PDMP) \cite{kang2013separation,hepp2015adaptive,crudu2009hybrid}, etc., all of which
reduce the computational complexity via approximating the firing of fast reactions by more tractable processes. 
(We refer interested readers to the literature \cite{schnoerr2017approximation} for a related discussion.)
Inspired by these methods, researchers also proposed computationally efficient particle filters using linear noise approximations \cite{sherlock2014bayesian},  chemical Langevin equations \cite{liu2012state,golightly2006bayesian}, and PDMPs \cite{fang2020stochastic,fang2021stochastic,hey2015stochastic}.

Another feature of intracellular reaction systems that needs to be considered is the large variability in model parameters from cell to cell, which suggests that model parameters should be inferred simultaneously with the state variables in the filtering problem.
A common way to combine these two tasks together is to view the model parameters as additional states of the system and, then,  to use the PF to infer the augmented state vector.  
In this situation, the conventional sequential importance resampling PF (SIRPF) suffers from sample degeneracy,
meaning that the particle diversity in parameters drops dramatically over time, and, therefore, the particles fail to represent the posterior accurately.
Good alternatives to SIRPFs include the resample-move method \cite{gilks2001following,berzuini2001resample} and the regularized particle filter (RPF) \cite{liu2001combined,oudjane2000progressive}, all of which mitigate sample degeneracy by perturbing particles with (additional) artificial noise and have been successful in various applications (see the aforementioned references).
This methodology is easy to implement but has the drawback that it treats model parameters (which are fixed over time) as time-varying variables and, consequently, throws aways some information about them \cite{liu2001combined}.
In the finite particle population case, researchers still cannot quantify the bias introduced by the artificial noise in an RPF \cite{kantas2015particle}.
However, in the large-particle limit, the RPF has already been shown to converge to the exact filter if the artificial noise is weak and the transition kernel satisfies some regularity conditions, e.g., the globally Lipschitz continuity \cite[Proposition 2.38]{del2000branching}, the mixing condition in the metric space \cite{le2004stability}, and the Lipschitz continuity of transition kernel's derivatives \cite{crisan2014particle}.

Motivated by the above facts, we present in this paper a computationally efficient RPF for multiscale SRNs based on reduced dynamical models and identify conditions under which the performance of the established RPF is guaranteed.
We note that the existing results about the convergence of RPFs cannot be directly applied to our problem. 
First, the globally Lipschitz condition of the transition kernel required in \cite[Proposition 2.38]{del2000branching} is too restrictive for SRNs that only a limited class of networks, such as those having linear propensities or bounded trajectories, satisfy it.
In addition, the discreteness of the state space of an SRN suggests that the mixing condition in the whole metric space \cite{le2004stability} and the derivative related condition \cite{crisan2014particle} can never hold. 
Instead, in this paper, we approach the problem by utilizing parameter sensitivity results in \cite{gupta2013unbiased,gupta2019sensitivity,gupta2018estimation}.
The rationale is that the parameter sensitivity can be used to quantify the effect of the artificial noise exerted to the model parameters; once we show that the error caused by artificial noise is small, the convergence of the RPF holds automatically.
Following this idea, we show that the established RPF converges to the exact filter in probability, if all the reactions that lead to an increase of the molecule population have linearly growing propensities and some other mild conditions are satisfied simultaneously.
The provided conditions are quite general, and they cover a large class of SRNs, including all mass-action networks in which only unimolecular reactions can lead to an increase of the total molecular population.

For the sake of simplicity, we only consider discrete-time observations in this paper. 
However,  we conjecture that the obtained result can be naturally extended to the continuous-time observation case with mild modifications, {as the parameter sensitivity analysis, the major technique used here, applies for continuous observations.} 
Besides, compared with the SIRPF that we provided previously for multiscale SRNs in \cite{fang2021stochastic,fang2020stochastic}, the RPF established in this paper has a relatively similar computational cost but a superior performance in inferring model parameters when the parameter uncertainty is large. 

The rest of this paper is organized as follows.
In \Cref{sec preliminary}, we briefly review the basics of SRNs and introduce the associated filtering problems.
Then, by using different dynamical models, we establish computationally efficient RPFs for SRNs on different scales and provide several mild conditions that guarantee the performance of the established algorithm in the limit of large particles (see \Cref{section RPF}).
When the particle population is finite, we also give a tuning rule for the hyperparameter (the intensity of artificial noise) so that a more precise estimate can be obtained.
In \Cref{Sec. numerical examples}, several numerical examples are presented to illustrate the effectiveness of the constructed RPF, particularly its superior performance to the SIRPF. 
Finally, \Cref{conclusion} concludes this paper.
To improve the readability of the paper, all the proofs are presented in the appendix. 

\section{Preliminary}\label{sec preliminary}

\subsection{Notations}
In this paper, we denote the natural filtered probability space by $( \Omega,  \mathcal F, \{\mathcal F_{t}\}_{t\geq 0}, \mathbb P)$, where $\Omega$ is the sample space, $\mathcal F$ is the $\sigma$-algebra, $\{\mathcal F_{t}\}_{t\geq 0}$ is the filtration, and $\mathbb P$ is the natural probability.
Also, we term $\mathbb N_{>0}$ as the set of positive integers, $\mathbb R^{n}$ with $n$ being a positive integer as the space of $n$-dimensional real vectors, $\|\cdot\|_2$ as the Euclidean norm, and $|\cdot|$ as the absolute value notation. 
For any positive integer $n$ and any $t>0$, we term $\mathbb D_{ \mathbb R^{ n}}[0,t]$ as the Skorokhod space that consists of all $\mathbb R^n$ valued cadlag functions on $[0,t]$. 

\subsection{Stochastic reaction networks}
In this paper, we consider an intracellular reaction system that consists of $n$ chemical species ($S_1, \dots ,S_n$)
and $r$ reactions:
\begin{equation*}
v_{1,j} S_1 + \dots + v_{n,j} S_n \ce{ ->} v'_{1,j} S_1 + \dots + v'_{n,j} S_n, \quad j=1,\dots, r,
\end{equation*}
where $v_{i,j}$ and $v_{i,j}$ are stoichiometric coefficients representing numbers of molecules consumed and produced in each reaction.
We term $X(t)=\left(X_1(t),X_2(t), \dots, X_n(t)\right)^{\top}$ as the vector of copy numbers of these species at time $t$.
Due to the small number of reactant molecules, an intracellular reaction system is usually model by a stochastic model, called the continuous-time Markov chain, whose dynamics are described by \cite{anderson2011continuous}
\begin{equation*}
X(t)= X(0)+\sum_{j=1}^{r} \zeta_j R_j\left( \int_0^t \lambda_j (k, X(s))\dd s \right)
\end{equation*}
where $\zeta_j\triangleq v'_{\cdot j}- v_{\cdot j}$ (for $j=1,\dots,r$),  $R_{j}(t)$ are independent unit rate Poisson processes, $k= (k_1,\dots,k_{\tilde r})^{\top}$ is an $\tilde r$-dimensional vector of model parameters, and $\lambda_j (\kappa,x)$ are propensities indicating the speed of reaction firing.
The most common propensity in chemistry and biology is the mass-action type, where $\lambda_j (\kappa,x)= k_j \prod_{i=1}^n \frac{x_i!}{\left(x_i-v_{i,j}\right)!}\mathbbold 1_{\{x_i\geq v_{i,j}\}}$ with $\mathbbold{1}_{\{\cdot\}}$ the indicator function.
In the literature, the above model describing the dynamics of intracellular reaction systems is termed as a stochastic reaction network (SRN) due to the stochasticity of the dynamics and the network structure of the chemical reactions.

Usually, an SRN encountered in systems biology is multiscale in nature \cite{kang2013separation}, meaning that different species vary a lot in abundance, and so do reaction rates. 
Following \cite{kang2013separation}, these quantities at different scales can be normalized as follows.
Let $N$ be a scaling factor, the parameter $\alpha_i$  the abundance factor of the $i$-th species such that $X^{N}_i(t)\triangleq N^{-\alpha_i}X_{i}(t) = O(1)$, and the parameter $\beta_j$ the magnitude of $k_{j}$ such that $k'_j\triangleq k_j N^{-\beta_j}=O(1)$.
Similarly, we term $\rho_j$ as the timescale of the $j$-th reaction such that $\lambda^{N}_j (\mathcal K, X^N(t))\triangleq N^{-\rho_j}\lambda_j (k, X(t))= O(1)$ where $\mathcal K\triangleq \left(k'_1,\dots,k'_{\tilde r}\right)^{\top}$.
Particularly, in the mass-action kinetics, these coefficients satisfy $\rho_j=\beta_j + \sum_{i=1}^{n}v_{ij} \alpha_i$.
Finally, by denoting $X^{N,\gamma}(t) \triangleq X^{N}(tN^{\gamma})$ with $\gamma$ the timescale of interest, we can re-express the dynamics as
\begin{align}{\label{eq. scaling stochastic dynamics}}
X^{N,\gamma}(t) =& X^{N,\gamma}(0)   +\sum_{j=1}^{r} \Lambda^{N} \zeta_j R_j\left( \int_0^{t} N^{\gamma+ \rho_j} \lambda^N_{j}(\mathcal K, X^{N,\gamma}(s)) \dd s \right) 
\end{align}
where $\Lambda^{N}\triangleq \text{diag}(N^{-\alpha_1},\dots,N^{-\alpha_s})$, and all terms but $\Lambda^{N}$ and $N^{\gamma+\rho_j}$ are in the order of $O(1)$.
In the sequel, we study the SRN in the normalized coordinate.
Also, for simplicity, we concern ourselves with the fastest timescale of the system, i.e., $\gamma_1\triangleq \min_{i}\left\{\alpha_i-\max_{j\in \{v_{i,j}\neq 0\}} \rho_j \right\}$, on which the system evolves at the rate of $\mathcal O (1)$.

\subsection{Filtering problems for SRNs}{\label{subsection filtering problems}}
Stochastic filtering for SRNs aims to infer hidden dynamic states of an SRN from partial observations of the system.
Usually, we need to simultaneously estimate the model parameter $\mathcal K$ in this filtering problem due to the uncertainty of $\mathcal K$.
In this paper, we suppose $\left(\mathcal K,X^{N}(0)\right)$ to have  a known prior distribution. 

For the observation, we assume that $m$ channels of light intensity signals, denoted by $Y^{N,\gamma_1}(\cdot)$, can be observed (with a microscope for instance) and satisfy
\begin{align*}
Y^{N,\gamma_1}(t_i)&= h\left(X^{N,\gamma_1}(t_i)\right)  + W(t_i) \qquad && \forall i\in \mathbb N_{>0},
\end{align*}
where $\{t_i\}_{i\in\mathbb N_{>0}}$ is a strictly increasing sequence of time points at which the observation is collected, $h$ is an $m$-dimensional bounded Lipschitz continuous function indicating the relation between the observation and the SRN, and $\{W_{\ell}(t_i)\}_{i\in\mathbb N_{>0}}$ are independent $m$-variate standard Gaussian random variables.
Usually, the dimension of the observation vector is much less than that of the state vector, meaning that only a part of the system information is detected.  
Besides, we assume that the observation noise is independent of the dynamical system \eqref{eq. scaling stochastic dynamics}, i.e., $\{W_{\ell}(t_i)\}_{i\in\mathbb N_{>0}}$ are independent of Poisson processes $R_{j} (\cdot)$ ($j=1,\dots,r$), system parameters $\mathcal K$, and the initial condition $X^{N}(0)$.
In the sequel, we term $\mathcal Y^{N,\gamma_1}_{t_i}$ as the $\sigma$-algebra generated by observations up to the time $t_i$.

In the filtering problem, the standard object to study is the conditional distribution process $\{\pi^{N,\gamma_1}_{t_i}\}_{i\in\mathbb N_{>0}}$ given by  $\pi^{N,\gamma_1}_{t_i}(\phi)\triangleq \mathbb E_{\mathbb P}\left[\phi\left(\mathcal K, X^{N,\gamma_1}(t_i)\right) \left | \mathcal Y^{N,\gamma_1}_{t_i} \right.\right]$ for any measurable function $\phi:\mathbb R^{\tilde r} \times \mathbb R^{n} \to \mathbb R$.
This conditional distribution process satisfies the following relations \cite[Proposition 10.6]{bain2008fundamentals}
\begin{align}
p^{N,\gamma_1}_{t_i}(A) &= \int_{z\in \mathbb R^{\tilde r} \times \mathbb R^n } K^{N,\gamma_1}_{t_i-t_{i-1}} (z, A) \pi^{N,\gamma_1}_{t_{i-1}}(\dd z), && \forall A \subset \mathbb R^{\tilde r} \times \mathbb R^n \text{ and } \forall i \in\mathbb N_{>0}, \label{eq. recurssive expression 1}\\
\frac{\dd \pi^{N,\gamma_1}_{t_i}}{ \dd p^{N,\gamma_1}_{t_i} } &= \frac{ \ell_{Y^{N,\gamma_1}(t_i)} }{p^{N,\gamma_1}_{t_i} \left(\ell_{Y^{N,\gamma_1}(t_i)}\right) }, && \forall i \in\mathbb N_{>0}, \label{eq. recurssive expression 2}
\end{align}
where $t_0=0$, $\pi^{N,\gamma_1}_{t_0}$ is the initial distribution, $K^{N,\gamma_1}_{t}(z, A)$ is the transition kernel  of the system \eqref{eq. scaling stochastic dynamics} being equal to $ \mathbb P\left( (\mathcal K, X^{N,\gamma_1}(t))\in A \left| (\mathcal K, X^{N,\gamma_1}(0))=z \right.\right)$, and $\ell_{y}(\kappa, x)\triangleq
\mathbb P \left( Y^{N,\gamma_1}(t_i)=y | X^{N,\gamma_1}(t_1)=x \right)$ is the likelihood function. 
Here, equation \eqref{eq. recurssive expression 1} can be interpreted as a prediction step where one estimates hidden dynamic states at $t_i$ using observations up to the previous time points, and \eqref{eq. recurssive expression 2} is an adjustment step where the conditional distribution is obtained by reweighing the prediction according to the new observation. 
Usually, the above equations cannot be solved explicitly; therefore, we intend to establish efficient particle filtering algorithms that can quickly and accurately solve these filtering problems.

\subsection{Model reduction for SRNs}

The dynamics of an SRN is a continuous-time Markov chain (CTMC) which can be exactly simulated using the Gillespie stochastic simulation algorithm \cite{gillespie1976general,gillespie1977exact} or the next reaction method \cite{gibson2000efficient}.
The computational complexity of these exact simulation algorithms is inversely proportional to the fastest timescale of reactions;
therefore, these algorithms can only be computationally efficient for those SRNs having no fast reaction (i.e., the maximum $N^{\gamma_1+\rho_j}$ is small), otherwise they take an impractical amount of computational time to simulate systems.
This fact can also have an adverse effect on particle filtering, as it is a simulation-based method (more details are illustrated in the next section).
To accelerate the simulation, a few algorithms have been invented for systems on different scales. 
In the following, we review two well-known reduced models, the tau-leaping model and the piecewise deterministic Markov process (PDMP).

\subsubsection{Tau-leaping}
The tau-leaping method is a stochastic analog to the Euler method for ODEs, where the algorithm changes the value of the propensities periodically over time rather than instantaneously after each jump event. 
Since the propensity is updated much less frequently in this method than in the exact simulation method, the tau-leaping algorithm consumes much less computational time to simulate the system. 
At the timescale $\gamma_1$, the tau-leaping algorithm can be expressed as
\begin{align}
X^{N,\gamma_1}_{\tau}(t)= X^{N}(0)
+\sum_{j=1}^{r} \Lambda^{N} \zeta_j R_j\left( \sum_{i=0}^{\infty} 
\left({\tau_{i+1}\wedge t}- {\tau_{i}\wedge t} \right) N^{\gamma_1+ \rho_j} \lambda^N_{j}\left(\mathcal K, X^{N,\gamma_1}_{\tau}\left( \tau_i \right)\right)  \right)
\end{align}
where $X^{N,\gamma_1}_{\tau}(t)$ is the state vector, and $\{\tau_i\}_{i\in\mathbb N}$ is an increasing sequence of time points at which propensities are updated. Equivalently, it can also be written by $$X^{N,\gamma_1}_{\tau}\left(\tau_{i+1}\right)= X^{N,\gamma_1}_{\tau}\left(\tau_i\right)
+\sum_{j=1}^{r} \Lambda^{N} \zeta_j \mathcal P_{j,i}$$
where $\mathcal P_{j,i}$ are independent Poisson-distributed random variables with mean $(\tau_{i+1}-\tau_i) N^{\gamma_1+ \rho_j} \lambda^N_{j}\left(\mathcal K, X^{N,\gamma_1}_{\tau}(\tau_i)\right)$.
For simplicity, in this paper, we consider the case where the grid $\{\tau_i\}_{i\in\mathbb N}$ is deterministic and contains all the observation times.
Usually, the leaping time should be short enough so that the normalized propensity will not change much during $[\tau_i, \tau_{i+1})$.
Also, it should be greater than the timescale of the fastest reaction so that it leaps over several reaction firing events and, thus, is faster than the exact simulation method. 
Often, the tau-leaping algorithm is applied to systems containing $ 10^{2}\sim 10^3$  molecules \cite{hu2011weak} (in other words, the maximum $N^{\gamma_1+ \rho_j}$ is $\mathcal O \left(10^{2}\sim 10^3\right)$).

In this paper, we only consider applying the tau-leaping algorithm to the system where the following assumptions hold.
\begin{assumption}[Nonnegativity of the tau-leaping algorithm]{\label{assumption non-negativity of tau-leaping algorithm}}
	Denote $|\tau|\triangleq \max_{i} (\tau_{i+1}-\tau_i)$. Then, for any $t>0$ and any $|\tau|>0$, the state $X^{N,\gamma_1}_{\tau}(t)$ is always nonnegative.
\end{assumption}
\begin{assumption}[Consistency of the tau-leaping algorithm]{\label{assumption convergence of tau-leaping algorithm}}
	For any fixed $\mathcal K$, there holds $X^{N,\gamma_1}_{\tau}(\cdot) \stackrel{|\tau| \to 0}{\Longrightarrow} X^{N,\gamma_1}(\cdot) $ on any interval $T>0$, in the sense of the Skorokhod topology.
\end{assumption}
In general, the above two properties assumed for the tau-leaping method are essential and non-restrictive. 
The nonnegativity assumption ensures the algorithm to generate reasonable trajectories that evolve in the nonnegative orthant, and this property can be achieved by wisely designing the grid $\{\tau_i\}_{i\in\mathbb N}$ (see \cite{anderson2008incorporating}).
The consistency assumption states that the tau-leaping algorithm is accurate (in the weak sense) in simulating the target system when the coarseness of the time-discretization scheme is small enough. 
A few sufficient conditions for the consistency assumption have been provided in the literature \cite{anderson2011error,li2007analysis,rathinam2005consistency} (for bounded processes) and \cite{rathinam2016convergence} (for unbounded processes).

\subsubsection{Piecewise deterministic Markov Processes (PDMP)}
For a system having an extremely large scaling factor $N$, the PDMP is a favorable approximate model, where the firing events of fast reactions are approximated by continuous processes, and those slow reactions are kept as what they are. 
Specifically, for fast reactions ($j\in \{\gamma_1+ \rho_j>0\}$), this method utilizes Poisson's law of large numbers to approximate the reaction firing events by continuous processes, i.e.,
\begin{equation*}
N^{-\gamma_1- \rho_j} \zeta_j R_j\left( \int_{0}^{t} N^{\gamma_1+ \rho_j} \lambda^N_{j}\left(\mathcal K, X^{N,\gamma_1}\left( s \right)\right) \dd s \right) 
\approx D^{\gamma_1+ \rho_j}\zeta_j \left(\int_0^t \lambda'_j( \mathcal K, X^{\gamma_1}(s))  \dd s\right)
\end{equation*}
where $\lambda'_j(\kappa, x)=\lim_{N\to+\infty} \lambda^N_j (\kappa, x)$, $X^{\gamma_1}(\cdot)$ is the limit process of $X^{N,\gamma_1}(\cdot)$, and $D^{\tilde \alpha} \triangleq \text{diag}( \mathbbold 1_{\{\alpha_1=\tilde \alpha\}}, \dots,  \mathbbold 1_{\{\alpha_n=\tilde \alpha\}})$ indicates whether a species has the abundance factor $\tilde \alpha$.
In this regard, the reduced model can be written by \cite{kang2013separation}
\begin{align}{\label{eq. reduced model at the first time scale}}
X^{\gamma_{1}}(t)=& \lim_{N\to \infty} X^{N}(0) + \sum_{j : \gamma_1+ \rho_j>0} D^{\gamma_1+ \rho_j}\zeta_j \left(\int_0^t \lambda'_j( \mathcal K, X^{\gamma_1}(s))  \dd s \right)   \\
& +  \sum_{j : \gamma_1+ \rho_j=0 } D^0\zeta_j 
R_{j}\left(\int_0^t \lambda'_j( \mathcal K, X^{\gamma_1}(s)) \dd s \right)  \notag
\end{align}
where $\lim_{N\to \infty} X^{N}(0)$ is assumed to exist.
Since the above Markov process is deterministic between two neighboring jumping times, it is named the piecewise deterministic Markov process (PDMP).
For a slower timescale, such a reduced model can also be constructed in the same spirit if some fast species are first eliminated by quasi-stationary assumption. 
A systematic procedure to construct a hierarchy of PDMPs representing dynamics in different timescales can be found in the literature \cite{kang2013separation}.

When $N$ is large, the simulation time of PDMPs is much less than that of the full model, because the PDMP avoids the exact simulation of fast reactions, and those continuous processes that approximate the firing events of fast reactions can be computed efficiently by the Euler method.
Detailed algorithms to simulate PDMPs have been provided in the literature \cite{crudu2009hybrid,duncan2016hybrid,hepp2015adaptive}.
The accuracy of the PDMP is ensured by the following proposition. 
\begin{assumption}\label{assumption convergence of the initial condition}
	$\lim_{N\to \infty} X^{N}(0)$ exists a.s., and $\lim_{N\to \infty} X^{N}_i(0)>0$ if  $\alpha_i>0$.
\end{assumption}
\begin{proposition}[Adapted from \cite{kang2013separation}]{\label{prop Kang kurtz gamma1}}
	For any fixed $\mathcal K$, if \cref{assumption convergence of the initial condition} holds, and $X^{N,\gamma_1}(t)$ (for all $N\in\mathbb N_{>0}$) and $X^{\gamma_1}(t)$ are almost surely non-explosive, then $X^{N,\gamma_1}(\cdot ) \stackrel{N\to\infty}{\Longrightarrow} X^{\gamma_{1}}(\cdot)$ in the sense of the Skorokhod topology on any finite time interval $[0,T]$.
\end{proposition}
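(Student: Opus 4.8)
The plan is to follow the multiscale averaging framework of Kang and Kurtz, treating the cited result as the backbone and adapting it to accommodate the unbounded propensities permitted here. The starting point is the random time-change (unit-rate Poisson) representation \eqref{eq. scaling stochastic dynamics}. I would first partition the reaction index set according to the sign of $\gamma_1+\rho_j$: reactions with $\gamma_1+\rho_j<0$ contribute terms whose time-changed arguments vanish as $N\to\infty$; reactions with $\gamma_1+\rho_j>0$ are the fast reactions that will be fluid-limited; and reactions with $\gamma_1+\rho_j=0$ retain their discrete stochastic character and survive unchanged in \eqref{eq. reduced model at the first time scale}.

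For the fast reactions, the central device is the functional strong law of large numbers for a unit-rate Poisson process, $\sup_{u\le U}|N^{-1}R_j(Nu)-u|\to 0$ a.s. Writing $R_j\!\left(\int_0^t N^{\gamma_1+\rho_j}\lambda_j^N\,\dd s\right)=N^{\gamma_1+\rho_j}\left[N^{-(\gamma_1+\rho_j)}R_j\!\left(N^{\gamma_1+\rho_j}\int_0^t \lambda_j^N\,\dd s\right)\right]$ and combining with the abundance prefactor $\Lambda^N$, the net exponent acting on species $i$ is $\gamma_1+\rho_j-\alpha_i$, which under the normalization that defines $\gamma_1$ is nonpositive: it is zero precisely when $\alpha_i=\gamma_1+\rho_j$ (the surviving continuous component selected by $D^{\gamma_1+\rho_j}$) and strictly negative for more abundant species (whose contribution vanishes). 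Together with the pointwise convergence $\lambda_j^N\to\lambda_j'$ and the convergence of initial conditions granted by \Cref{assumption convergence of the initial condition}, this identifies the integrand of each surviving deterministic term.

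To make the weak convergence rigorous I would establish tightness of $\{X^{N,\gamma_1}\}_N$ in $\mathbb D_{\mathbb R^n}[0,T]$ and then identify every limit point as a solution of the integral equation \eqref{eq. reduced model at the first time scale}. Tightness follows from control of the martingale parts (the centered Poisson integrals, whose predictable quadratic variations are their compensators) together with the finite-variation drift parts, verified via Aldous's criterion. Identification proceeds by passing to the limit in the martingale problem associated with \eqref{eq. scaling stochastic dynamics}: the fast martingale terms vanish by the law of large numbers above, leaving exactly the generator of the PDMP. Uniqueness of the solution to \eqref{eq. reduced model at the first time scale} then upgrades subsequential convergence to convergence of the full sequence.

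The main obstacle is the absence of global bounds on the propensities: mass-action $\lambda_j^N$ grow polynomially, so neither the law-of-large-numbers approximation nor the tightness estimates hold uniformly in time without restricting the state. I would therefore localize by the exit times $\tau_M^N=\inf\{t:\|X^{N,\gamma_1}(t)\|_2\ge M\}$, carry out the convergence argument for the stopped processes where all propensities are bounded, and then remove the cutoff by letting $M\to\infty$. This last step is exactly where the non-explosivity hypotheses enter: almost sure non-explosion of both $X^{N,\gamma_1}$ (uniformly in $N$) and the limit $X^{\gamma_1}$ guarantees that $\tau_M^N\wedge T\to T$ in probability as $M\to\infty$, so the localized limit extends to all of $[0,T]$ without the stopping times accumulating before $T$.
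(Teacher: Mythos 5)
The paper does not contain a proof of this proposition: it is imported as-is from \cite{kang2013separation}, and the appendix only ever invokes it as a black box. So the only meaningful comparison is between your sketch and the argument in that reference, which your outline reproduces faithfully: the trichotomy on the sign of $\gamma_1+\rho_j$, the functional law of large numbers $\sup_{u\le U}\left|N^{-1}R_j(Nu)-u\right|\to 0$ applied to the fast reactions, the bookkeeping of the net exponent $\gamma_1+\rho_j-\alpha_i\le 0$ that produces the projections $D^{\gamma_1+\rho_j}$ in \eqref{eq. reduced model at the first time scale}, and tightness plus martingale-problem identification followed by uniqueness for the limit equation. That is the right skeleton, and it is the standard one.

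The one step you should not wave at is the removal of the localization. Almost sure non-explosivity of $X^{N,\gamma_1}$ \emph{for each fixed} $N$ gives $\mathbb P(\tau_M^N\le T)\to 0$ as $M\to\infty$ for each $N$ separately, but the triangle-inequality argument that upgrades convergence of the stopped processes to convergence of the unstopped ones requires this to hold uniformly in $N$ (a compact containment condition); you acknowledge this only in the parenthetical ``uniformly in $N$'', which is not among the stated hypotheses. Two standard repairs exist. One is a uniform-in-$N$ bound on $\mathbb E\bigl[\sup_{t\le T}\|X^{N,\gamma_1}(t)\|_2\bigr]$, obtainable by the Gronwall argument behind \cref{proposition finiteness of moments} under \Cref{assumption for full model} --- but that assumption is not part of the proposition's hypotheses. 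The other, which works with only what is stated, is a bootstrap: once the stopped processes are shown to converge, $\limsup_{N}\mathbb P(\tau_M^N\le T)\le\mathbb P\bigl(\sup_{t\le T}\|X^{\gamma_1}(t)\|_2\ge M-1\bigr)$, which tends to zero by non-explosivity of the limit, and the finitely many remaining small $N$ are handled by their individual non-explosivity. As written, your last paragraph asserts the conclusion of this step rather than deriving it from the stated hypotheses; with either repair spelled out, the sketch matches the cited argument.
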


\subsubsection{Working ranges of different simulation algorithms}{\label{subsubsection tau-leaping and PDMP}}
	The working ranges of the aforementioned simulation algorithms are quite different. 
	When all reactions fire at the rate of $\mathcal O(1\sim 10)$ (i.e., all $N^{\gamma_1+ \rho_j}$ are $\mathcal O(1\sim 10)$), the exact simulation algorithms for the CTMC are recommended, as they are perfectly accurate and computationally efficient in this case.
	The tau-leaping algorithm is usually applied to systems where the fastest reaction fires at the rate of $\mathcal O \left(10^{2}\sim 10^3\right)$;
	in this scenario, the tau-leaping algorithm can be much faster than the exact simulation method while still capturing key behaviors of the system. 
	Finally, when the scaling factor $N$ is sufficiently large, the PDMP method is preferred to the others, because in this situation the randomness of fast reaction firing is averaged out due to the law of large numbers. 
	
	There is still one situation excluded here, that some reactions fire at extremely fast rates (e.g., $\mathcal O \left(10^{4}\right)$), but some others fire at the rate of $\mathcal O \left(10^{2}\sim 10^3\right)$.
	In this case, the PDMP model fails to capture key behaviors of the system, as there is still randomness in those reactions on medium timescales.
	On the other hand, the tau-leaping method is slow in this situation, because it needs to frequently  generate Poisson random variables of large means.
	Instead, the diffusion approximation is recommended for these systems, where the firing of a medium timescale reaction is approximated by a diffusion process.
	Since this algorithm just needs to generate Gaussian random variables, the computational cost is low;
	moreover, its accuracy is higher than that of the PDMP \cite{kang2014central}.
	However, in this paper, we do not consider this type of approximation due to the difficulties in the analysis (particularly the parameter sensitivity analysis of this reduced model).

\section{Regularized particle filter (RPF) for SRNs} \label{section RPF}

\subsection{Constructing RPFs for SRNs on different scales}
Usually, the conditional distribution of an SRN cannot be computed explicitly due to the nonlinearity of the dynamics.
Therefore, we apply particle filters to the associated filtering problem. 
The key idea of the particle filtering (also known as the sequential Monte Carlo method \cite{gordon1993novel}) is to approximate the posterior by a number of weighted trajectories sampled from the distribution of the target system.
A detailed particle filtering algorithm is provided in \Cref{alg discrete time regularized particle filters}.
Specifically,  the sampling step in \Cref{alg discrete time regularized particle filters} mimics the recursive formulas \eqref{eq. recurssive expression 1} (the prediction step) and \eqref{eq. recurssive expression 2} (the adjustment step), in which weighted particles are generated to represent the conditional distribution.
The resampling step is executed to remove insignificant particles and, therefore, mitigate long-term weight degeneracy at the cost of adding additional noise at the current step \cite{doucet2009tutorial}. 
In this paper, we use the residual resampling, as it introduces the minimum noise among all resampling methods \cite[Exerice 9.1]{bain2008fundamentals}.
Finally, particles are perturbed by artificial noise so that the particle diversity improves, and the sample degeneracy meaning that all particles become identical can be avoided.
Such a particle filtering algorithm with an artificial evolution step is called a regularized particle filter (RPF); if the artificial evolution is not executed, then this algorithm is called a sequential importance resampling particle filter (SIRPF).

\begin{algorithm}[htbp]
	\caption{Regularized particle filter \cite{oudjane2000progressive}}
	\label{alg discrete time regularized particle filters}
	
	\begin{algorithmic}[1]
		\STATE  Input observations $\{Y(t_i)\}_{i\in\mathbb N}$, a dynamical model, and the initial distribution.
		\STATE  \textit{Initialization:} Sample $M$ particles $(\bar \kappa_1(0), \bar x_1(0)),\dots, (\bar \kappa_M(0), \bar x_M(0))$; set $i=1$ and $t_0=0$.
		\WHILE{$t_i$ does not exceed the terminal time of the observations}
		\STATE  \textit{Sampling:} simulate $x_j(\cdot)$ ($j=1,\dots,M$) from time $t_{i-1}$ to $t_i$ according to the dynamical model with parameters $\bar \kappa_j(t_{i-1}) $ and initial conditions $\bar x_j (t_{i-1})$;  Set $\kappa_j (t_{i})=\bar \kappa_j (t_{i-1})$; Calculate weights $w_j(t_i)\propto \ell_{Y^{N,\gamma}(t_i)}(\kappa_j(t_i), x_j(t_i))$.
		\STATE  \textit{Output the filter:}  $\bar \pi_{M,t_i} (\phi)  = {\sum_{j=1}^{M} w_j(t_i) \phi(\kappa_j, x_j(t_i))}$.
		\STATE  \textit{Resampling:} Resample 
		$\{ w_{j}(t_i), (\kappa_j(t_i), x_j(t_i))\}_j$ to obtain M equally weighted samples $\{{1}/{M}, (\hat \kappa_j(t_i), \bar x_j(t_i))\}_j$. (Residual resampling is applied)
		\STATE  \textit{Artificial evolution:} sample $\bar \kappa_j(t_i)$ from a kernel $\eta_{M}\left(\cdot | \hat \kappa_j(t_i) \right)$. 
		\ENDWHILE
	\end{algorithmic}
\end{algorithm}

Conventionally, in the sampling step, particles are obtained by simulating the full dynamical model of the system.
However, it is not plausible for SRNs, especially the multiscale ones, where the exact simulation of \eqref{eq. scaling stochastic dynamics} can take an impractical amount of computational time. 
Instead, for SRNs on different scales, we utilize different dynamical models in the sampling step so that the filter is computationally efficient. 
The specific construction of such RPFs is as follows.
\begin{definition}[RPFs for SRNs on different scales] \label{def RPFs for SCRNs}
\begin{itemize}
	\item If all reactions fire at the rate of $\mathcal O(1\sim 10)$ (i.e., all $N^{\gamma_1+ \rho_j}$ are $\mathcal O(1\sim 10)$), then we build the filter by utilizing exact simulation algorithms in the sampling step. 
	We denote such a RPF by $\bar \pi^{N,\gamma_1}_{M,t_i}$.
	\item If the fastest reaction fires at the rate of $\mathcal O \left(10^{2}\sim 10^3\right)$, then we use the tau-leaping method in the sampling step. 
	We denote such a RPF by $\bar \pi^{N,\gamma_1,\tau}_{M,t_i}$.
	\item If $N$ is extremely large, we use the PDMP in the sampling step. We denote such a RPF by $\bar \pi^{N,\gamma_1, H}_{M,t_i}$. (Here, $H$ stands for hybrid approximations. This filter has dependence on $N$ because the observation $Y^{N,\gamma_1}(\cdot)$ is input into the algorithm.)
\end{itemize}
\end{definition}

Another important factor for the RPF is the artificial noise used in the artificial evolution step. 
Though this noise can mitigate sample degeneracy, it also introduces a bias to the estimate.
Therefore, if the kernel $\eta_{M}$ is not carefully chosen, the RPF can yield a very inaccurate estimate of the conditional distribution.
In this paper, we design the kernel $\eta_{M}$ according to the following assumptions so that the intensity of the artificial noise is contained.
\begin{assumption}{\label{cond. conditions on the artificial noise 1}}~
	\begin{itemize}
		\item $\mathcal K$'s prior distribution has a compact convex support. We denote it by $\Theta$.
		\item We term $\mathcal V$ as the covariance matrix of $\mathcal K$ at the initial time. Then, for any $M>0$ and $\kappa\in\Theta$, the kernel $\eta_{M}\left(\cdot | \kappa\right)$ satisfies
		\begin{equation*}
		\eta_{M}\left( \tilde \kappa | \kappa\right) \propto \mathbbold 1_{\{\tilde \kappa\in \Theta \}} \mathcal N\left(\tilde \kappa \left| \kappa, C^2_{\eta} \mathcal V/M \right. \right) 
		\qquad \forall \tilde \kappa\in\mathbb R^{\tilde r},
		\end{equation*}
		where $C_{\eta}$ is a predetermined positive constant, and $\mathcal N\left(\cdot \left| \kappa, C^2_{\eta} \mathcal V/M \right. \right)$ is a Gaussian kernel with $\kappa$ mean and $C^2_{\eta} \mathcal V/M$ covariance matrix.
	\end{itemize}
\end{assumption}
In general, the above assumption suggests that the artificial evolution keeps particles in the support of the prior distribution, and the artificial noise is in the order of $O\left( M^{-1/2}\right)$.

\begin{remark}{\label{remark artificial evolution}}
	In references \cite{liu2001combined,liu2012state}, there is an alternative choice of the artificial noise where $\eta_{M}\left( \tilde \kappa | \kappa\right)=\mathcal N\left(\tilde \kappa \left| a\kappa+(1-a) \mu_t, (1-a) C^2_{\eta} \mathcal V_t/M \right. \right)$  with $\mu_t$ and $\mathcal V_t$ the mean and covariance matrix of particles $\{\hat k_j(t_i)\}_{1\leq j\leq M}$ and `$a$' a constant in $[0,1]$.
	Compared with our setting, this one can adjust the artificial noise intensity dynamically according to the particle distribution, and, moreover, the associated filter can avoid ``variance inflation" \cite{liu2001combined}, meaning that the artificial evolution in its filter does not change the particle variance.
	However, we think that variance inflation is beneficial, and with the method in \cite{liu2001combined,liu2012state}, the RPF can only delay sample degeneracy but never fully solve it.
    The reasons are as follows.
    With more observations collected, the exact posterior of the model parameter becomes narrower, and the variance of the particles corresponds to $\mathcal K$ inclines to decay dramatically over time.
	Since the method in \cite{liu2001combined,liu2012state} avoids variance inflation, its particle variance decays strictly over time and finally degenerates to zero; in other words, the particles $\{ k_j(t_i)\}_{1\leq j\leq M}$ collapse into a single point in the end.
	Ideally, the point should be close to the maximum of the posterior.
	However, due to the randomness, this ideal case can rarely happen, and, consequently, sample degeneracy arises. 
    In contrast, the RPF satisfying \cref{cond. conditions on the artificial noise 1} 
    can greatly improve the particle variance (when it is too small) due to the constant covariance matrix of the artificial noise and, therefore, can fully solve the sample degeneracy problem.

	
	Another fact to support our choice of artificial noise is that in reality the model parameters fluctuate dynamically in individual living cells due to environmental changes and the cell cycle \cite{rosenfeld2005gene}.
	This requires the algorithm to generate diverse particles even if the exact posterior is very narrow.
	Again, due to the constant covariance matrix of the artificial noise, our framework meets this requirement. 
\end{remark}

\begin{remark}
	Compared with our previous work \cite{fang2021stochastic,fang2020stochastic}, where an SIRPF was established for SRNs based on the PDMP, this paper constructs filters using one more reduced model, the tau-leaping method. 
	More importantly, the RPF constructed in this paper involves an artificial evolution step, which can mitigate sample degeneracy and make the filter perform better when the parameter uncertainty is large. 
	In the next section, we will illustrate this point using several numerical examples. 
\end{remark}

\subsection{Convergence of RPFs}
In this subsection, we investigate conditions under which the established RPFs can provide reliable estimates of the hidden dynamic states and model parameters in the limit of large particles.
In the literature \cite{del2000branching,le2004stability,crisan2014particle}, a few conditions have been proposed to guarantee the convergence of RPFs;  however, none of them can be directly applied to our setting for the following reasons.

\begin{remark}
	It has been shown that an RPF converges to the exact filter if the artificial noise is weak and the transition kernel satisfies some regularity conditions, e.g., the globally Lipschitz continuity \cite[Proposition 2.38]{del2000branching}, the mixing condition in the metric space \cite{le2004stability}, and the Lipschitz continuity of transition kernel's derivatives \cite{crisan2014particle}.
    For an SRN, the globally Lipschitz continuity of the transition kernel is very restrictive; it usually requires the propensities to be Lipschitz continuous (e.g., linear propensities), or the trajectory to evolve in a bounded region.  
    In addition, the discrete state space of an SRN invalidates the mixing condition in the whole metric space \cite{le2004stability} and the derivative related condition in \cite{crisan2014particle}. 
\end{remark}

In this paper, we approach this problem in an alternative way. 
Note that when the intensity of the artificial noise is zero, the constructed RPF degenerates to an SIRPF, and its convergence has been shown in part in our previous work \cite{fang2021stochastic}.
Therefore, the key to proving the convergence of the established RPFs lies in estimating the bias introduced by the artificial evolution and showing it to vanish as particle population tends to infinity. 
This corresponds to calculating the parameter sensitivity for SRNs, as the artificial noise only perturbs the model parameter part of the particles.
Fortunately, neat expressions for this parameter sensitivity function have been provided in the literature \cite{gupta2013unbiased,gupta2019sensitivity,gupta2018estimation}, by which we can easily analyze the bias of interest. 
Finally, through some technical analyses, we can show that both the parameter sensitivity and the state vector grow mildly under some weak conditions, and, consequently, the constructed RPF is convergent in probability to the exact filter.
More detailed discussions about it are presented in the Appendix.

We now specify a few assumptions needed in our main results. 
In \Cref{remark covered SCRNs}, we illustrate that such conditions are so mild that they cover a large class of SRNs. 

\begin{assumption}{\label{assumption for full model}}
	~
	\begin{itemize}
		\item All the propensities $\lambda^N_j (\kappa, x)$ are twice differentiable with respect to $\kappa$. Moreover, there exist positive constants $C_{\lambda,1}$ and $q$ such that
		\begin{equation*}
		\max\left\{
		\left|\lambda^N_j (\kappa, x)\right|,
		\left\|\frac{\partial \lambda^N_j (\kappa, x)}{\partial \kappa}\right\|_2,
		\left\|\frac{\partial^2 \lambda^N_j (\kappa, x)}{\partial \kappa^2}\right\|_2
		\right\} \leq C_{\lambda,1} \left(1+\left\|x\right\|_2^q\right)
		\end{equation*} 
		for any $N>0$, $j\in\{1,\dots,r\}$, $\kappa\in \Theta$, and $x\in \mathbb{R}^{n}_{\geq 0}$.
		\item  For any $N>0$ and $j\in\{1,\dots, r\}$, if $\lambda^N_j (\kappa, x)>0$, then $x+\Lambda^{N} \zeta_j $ has no negative component. 
		\item  For any $p>0$, there exists a constant $ C_{0,p}$ such that  $\mathbb E \left[  \left\| X^{N}(0)\right\|_2^p \right]\leq C_{0,p}$
		\item  There exists a positive constant $C_{\lambda,2}$ such that for any $N$ and $j$ satisfying $ \bar 1^{\top} \Lambda^{N}  \zeta_j >0$ ($\bar 1$ is the all-one vector) there holds
		\begin{equation*}
		\lambda^N_j (\kappa, x) \leq C_{\lambda,2}\left(1+\|x\|_2\right)\qquad \forall (\kappa,x)\in \Theta\times \mathbb R^{n}_{\geq 0}.
		\end{equation*}
		In other words, the propensity of the reaction that leads to an increase of the total population has at most a linear growth rate with the state argument. 
	\end{itemize}
\end{assumption}

\begin{assumption}{\label{assumption for PDMPs}}
For any reaction $j$ and parameter $\kappa\in\Theta$, the convergence $\lambda^N_j(\kappa, \cdot) \to \lambda'_j(\kappa, \cdot) $ (as $N\to\infty$) holds uniformly over compact sets. The same is true for derivatives $\frac{\partial \lambda^N_j (\kappa, \cdot)}{\partial \kappa} \to \frac{\partial \lambda'_j (\kappa, \cdot)}{\partial \kappa}$ and $\frac{\partial \lambda^N_j (\kappa, \cdot)}{\partial x} \to \frac{\partial \lambda'_j (\kappa, \cdot)}{\partial x}$.
\end{assumption}

Then, we present the main result of this paper, which states that the constructed RPFs can be very close to the exact filter when the particle population is large.

\begin{theorem}{\label{thm main}}
	Assume that \Cref{cond. conditions on the artificial noise 1} and \Cref{assumption for full model} hold.
	Also, we suppose the test function $\phi:\Theta\times \mathbb R^{n}_{\geq 0} \to \mathbb R$ to be bounded, continuous, and continuously differentiable in the first $m$-coordinates and satisfy
	\begin{equation}\label{eq. sensitivity condition of phi}
	\left\|\frac{\partial  \phi (\kappa, x)}{\partial \kappa}\right\|_2 \leq C_{\phi} \left(1+\left\|x\right\|_2^q\right)
	\qquad \forall (\kappa,x)\in \Theta \times \mathbb R^{n}_{\geq 0},
	\end{equation}
	where $C_{\phi}$ is a predetermined constant, and $q$ is the same as that in \Cref{assumption for full model}.
	Then, the following results for the constructed RPFs hold for any  $i\in\mathbb N_{>0}$.
	\begin{itemize}
		\item 
		$ \lim_{M\to\infty}\mathbb E \left[\left|\bar \pi^{N,\gamma_1}_{M,t_i}(\phi)- \pi^{N,\gamma_1}_{t_i}(\phi) \right| ^2 \right] =0$ 
		for any scaling factor $N$, where $\bar \pi^{N,\gamma_1}_{M,t_i}(\phi)$ is the filter built on exact simulation algorithms, and $\pi^{N,\gamma_1}_{t_i}(\phi)$ is the exact filter.
		\item If, moreover, \Cref{assumption non-negativity of tau-leaping algorithm} and \Cref{assumption convergence of tau-leaping algorithm} hold, then there is the relation 
		$ \lim_{M\to\infty} \lim_{|\tau|\to 0} \mathbb E \left[ \left|\bar \pi^{N,\gamma_1,\tau}_{M,t_i}(\phi)- \pi^{N,\gamma_1}_{t_i}(\phi) \right|^2 \right] =0$ 
		for any $N$, where $\bar \pi^{N,\gamma_1,\tau}_{M,t_i}(\phi)$ is the filter built on the tau-leaping method.
		\item If \Cref{assumption convergence of the initial condition} and \Cref{assumption for PDMPs} also hold, then there is the relation 
		$ \lim_{M\to\infty} \lim_{N\to +\infty} \mathbb E \left[\left|\bar \pi^{N,\gamma_1,H}_{M,t_i}(\phi)- \pi^{N,\gamma_1}_{t_i}(\phi) \right|^2\right] =0$,
		where $\bar \pi^{N,\gamma_1,H}_{M,t_i}(\phi)$ is the filter built on the PDMP model.
	\end{itemize}
\end{theorem}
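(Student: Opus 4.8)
The plan is to prove all three statements by induction on the observation index $i$, reducing the latter two bullets to the first one via their iterated-limit structure. The governing idea is to separate the two distinct sources of error in the RPF: the Monte Carlo sampling error inherent to any $M$-particle approximation, and the bias injected by the artificial-evolution kernel $\eta_M$. To this end I would introduce, alongside the random empirical measure $\bar\pi^{N,\gamma_1}_{M,t_i}$, a deterministic ``ideal'' measure $\tilde\pi^{N,\gamma_1}_{M,t_i}$ obtained by running the same prediction--update--regularization recursion \eqref{eq. recurssive expression 1}--\eqref{eq. recurssive expression 2} at the level of measures, that is, with $\eta_M$ applied exactly rather than by sampling. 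Then I would split
\[
\left|\bar\pi^{N,\gamma_1}_{M,t_i}(\phi)-\pi^{N,\gamma_1}_{t_i}(\phi)\right|
\leq
\left|\bar\pi^{N,\gamma_1}_{M,t_i}(\phi)-\tilde\pi^{N,\gamma_1}_{M,t_i}(\phi)\right|
+\left|\tilde\pi^{N,\gamma_1}_{M,t_i}(\phi)-\pi^{N,\gamma_1}_{t_i}(\phi)\right|,
\]
treating the first (Monte Carlo) term and the second (regularization-bias) term by different techniques.

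For the first bullet, the Monte Carlo term is controlled by the standard machinery for particle filters: each operation of the recursion (sampling from the exact transition kernel, reweighting by the bounded likelihood $\ell_y$, and residual resampling) is $L^2$-stable, the sampling error at each step is $O(M^{-1/2})$, and since $\phi$ is bounded these errors compose over the finitely many steps up to $t_i$ without blow-up; this is where I would invoke the SIRPF convergence already established in \cite{fang2021stochastic}. The regularization-bias term is the genuinely new ingredient. Because $\eta_M(\cdot\mid\kappa)$ is a truncated Gaussian of covariance $C_\eta^2\mathcal V/M$, it differs from the Dirac mass $\delta_\kappa$ by a displacement of order $M^{-1/2}$ in the parameter coordinate, so its effect on a subsequent prediction--update cycle is governed by the derivative of the filter in $\kappa$. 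I would express this derivative through the parameter-sensitivity functions of \cite{gupta2013unbiased,gupta2019sensitivity,gupta2018estimation}; the growth hypothesis \eqref{eq. sensitivity condition of phi} on $\partial\phi/\partial\kappa$, together with the differentiability and polynomial-growth bounds on $\lambda^N_j$ in \Cref{assumption for full model}, then bound each step's bias by a constant times $M^{-1/2}$ multiplied by a polynomial moment of the state.

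The decisive input is a uniform-in-$M$ moment bound $\sup_M\mathbb E[\|X^{N,\gamma_1}(t)\|_2^p]<\infty$ for the sampled trajectories, which is exactly what the last item of \Cref{assumption for full model} (at most linear growth of the propensities of reactions increasing the total molecular population) is designed to deliver; this keeps the sensitivity polynomially bounded in time and prevents the per-step biases from accumulating into anything worse than $O(M^{-1/2})$ over the fixed horizon $t_i$. Combining the two estimates yields the first bullet. The tau-leaping bullet then follows from the order of limits: for fixed $M$, \Cref{assumption non-negativity of tau-leaping algorithm} and \Cref{assumption convergence of tau-leaping algorithm} give $\bar\pi^{N,\gamma_1,\tau}_{M,t_i}(\phi)\to\bar\pi^{N,\gamma_1}_{M,t_i}(\phi)$ in $L^2$ as $|\tau|\to0$ (weak convergence of the tau-leaping kernel passes through the finite recursion because $\phi$ and $\ell_y$ are bounded continuous), and letting $M\to\infty$ is then the first bullet. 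The PDMP bullet has the same skeleton but needs one extra robustness step: for fixed $M$, \Cref{prop Kang kurtz gamma1} and \Cref{assumption for PDMPs} give $X^{N,\gamma_1}\Rightarrow X^{\gamma_1}$, so as $N\to\infty$ both the observations $Y^{N,\gamma_1}$ and the exact filter $\pi^{N,\gamma_1}_{t_i}$ converge to their PDMP counterparts, while $\bar\pi^{N,\gamma_1,H}_{M,t_i}$ converges to the PDMP-based RPF; the first-bullet argument applied to the limiting PDMP system then closes the estimate as $M\to\infty$.

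I expect the main obstacle to be the control of the regularization bias across the recursion, i.e. showing that the per-step displacement of order $M^{-1/2}$ does not compound into an $O(1)$ error after $i$ prediction--update cycles. This is precisely where the parameter-sensitivity estimates and the linear-growth condition on the ``upward'' reactions must be combined carefully: without the latter, the sensitivity and the state moments entering \eqref{eq. sensitivity condition of phi} could grow super-polynomially, destroying the $O(M^{-1/2})$ bound. A secondary technical point is uniform integrability in the reduced-model limits, since the weak convergence supplied by \Cref{assumption convergence of tau-leaping algorithm} and \Cref{prop Kang kurtz gamma1} only controls bounded continuous test functions; passing the sensitivity bounds (which involve the polynomially growing $\partial\phi/\partial\kappa$) through the $|\tau|\to0$ and $N\to\infty$ limits requires the same moment bounds to furnish the necessary uniform integrability.
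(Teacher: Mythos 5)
Your proposal is correct and follows essentially the same route as the paper: an induction over observation times, isolation of the artificial-evolution bias as the new error source, a parameter-sensitivity bound on $\partial(K_t\ell_y\phi)/\partial\kappa$ of polynomial growth combined with uniform moment bounds (delivered by the linear-growth condition on population-increasing reactions) to show that bias is $O(M^{-1/2})$ per step, and the reduced-model weak convergence to handle the iterated $|\tau|\to 0$ and $N\to\infty$ limits. The only organizational difference is that you interpose a deterministic ``regularized'' measure-valued flow and split into two terms, whereas the paper telescopes the one-step error into five empirical-measure terms (finite sampling, artificial evolution, resampling, model reduction, inherited error) and bounds the artificial-evolution term directly by a mean-value-theorem comparison of perturbed versus unperturbed particles; both variants rest on the same key lemmas.
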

\begin{proof}
	The proof is shown in the {appendix}.
\end{proof}

\begin{remark}{\label{remark covered SCRNs}}
	The conditions required in the above theorem are not restrictive.
	\Cref{assumption non-negativity of tau-leaping algorithm}, \Cref{assumption convergence of tau-leaping algorithm}, and \Cref{cond. conditions on the artificial noise 1} can be easily achieved by properly designing the tau-leaping algorithm and the RPF.
	\Cref{assumption convergence of the initial condition} and \Cref{assumption for PDMPs} are concerned with the well-definiteness of the PDMP, which are reasonable to assume for those systems where the PDMP has already been successfully applied for simulation.
	For \Cref{assumption for full model}, the first claim (the polynomial growth rate of the propensity) is satisfied in most SRNs considered in the existing literature (following the mass-action kinetics or the Hill-type dynamics); the second claim is nothing but ensuring the state vector to be nonnegative;
	the third requires all moments of initial conditions to be finite, which is natural in biology.
	The last claim in \Cref{assumption for full model} indicates that all reactions that lead to an increase of the molecular population have linearly growing propensities.
	Though the last one is not as weak as other assumptions, it still includes a lot of SRNs in systems biology, e.g., the simple birth-death model, the antithetic integral controller that interconnects to linear networks  (see \cite{briat2016antithetic,aoki2019universal} for antithetic integral controllers), and all mass-action networks  where only unimolecular reactions can lead to an increase of the total molecular population.
	In summary, the conditions required in \Cref{thm main} are quite mild, and the proposed RPFs work for a large class of SRNs.
\end{remark}

\subsection{Hyperparameter tuning}
In the finite particle population case, the hyperparameter $C_{\eta}$, which depicts the intensity of the noise in artificial evolution  (see \Cref{cond. conditions on the artificial noise 1}), has a significant influence on the performance of the established filter and, therefore, needs to be properly trained in advance.
Intuitively, this hyperparameter tuning task is an exploration-exploitation trade-off: by increasing the hyperparameter $C_\eta$, the RPF can explore a broad region of the parameter space and have a higher chance to discover all regions of high posterior, but at the cost of missing some knowledge conveyed in the current particles. 

To optimize the hyperparameter, we generate a large number of simulated trajectories, $\{(\tilde \kappa_{\ell}, \tilde x_\ell(\cdot),\tilde y_\ell(\cdot))\}$, where $(\tilde \kappa_{\ell}, \tilde x_\ell(0))$ are sampled from the prior of $\left(\mathcal K, X^{N}(0)\right)$, $x_\ell(\cdot)$ is the simulated trajectory of the considered SRN, and $y_\ell(\cdot)$ is the simulated observation. 
A good $C_{\eta}$ should make the associated particle filter close to the exact filter when applied to the simulated data. 
However, this criterion cannot be directly used to optimize $C_{\eta}$ as the exact filter is unavailable.

Alternatively, we train the hyperparameter by the accuracy of the inferred parameters and the diversity of particles.
This strategy first requires the parameter identifiability assumption meaning that given long-time observations, the posterior of the model parameter is densely distributed around the true value.
This assumption is not very restrictive; usually, it can be checked by the frequency spectrum  \cite{song2019frequency} and achieved by optimal experimental design \cite{faller2003simulation}.
According to the parameter identifiability, the accuracy of the RPF can be evaluated by the distance between the inferred model parameter (by the RPF) and the exact parameter.
As the true value of the model parameter is known in the simulated data, we can easily apply this rule to training the filter.
In addition, particle diversity in model parameters should also be considered in hyperparameter tuning, and we evaluate it by the entropy of the particles ($\{ \kappa_j\}_{j=1,\dots, M}$). 
To describe particle diversity more accurately, we truncate the value of particles to four decimal places, so that two particles in close distance are regarded as identical ones.
In conclusion, the optimal hyperparameter, denoted by $C^*_{\eta}$, which balances accuracy and diversity, can be obtained by solving the following optimization problem:
\small{
\begin{align}\label{eq. performance of C eta}
&C^*_{\eta} = \\
& \notag \mathop{\arg\min}\limits_{C_{\eta}>0} 
\sum_\ell  \mathbb E \left[\left. {\bar \pi_{M,t_L,C_{\eta}}\left( \frac{\left\|\kappa-\tilde \kappa_\ell\right\|^2_2}{ \left\| \tilde \kappa_\ell \right\|^2_2} \right)}+\frac{1}{\exp\left( \mathcal{E}_{\pi_{M,t_L,C_{\eta}}}\right) }\right|~ Y^{N,\gamma_1}(t_i)  = \tilde y_{\ell}(t_i), ~ \forall t_i \leq t_{L}   \right], 
\end{align}}where $t_L$ is a large time point, $\bar \pi_{M,t_L,C_{\eta}}$ represents the RPF established in \Cref{def RPFs for SCRNs} with the hyperparameter $C_{\eta}$, and $\mathcal{E}_{\pi_{M,t_L,C_{\eta}}}$ is the entropy of the particles $\{\bar \kappa_j\}_{j=1,\dots,M}$ at the time $t_L$.
In this formula, the term $\bar \pi_{M,t_L,C_{\eta}}\left( \left\|\kappa-\tilde \kappa_\ell\right\|^2_2/ \left\| \tilde \kappa_\ell \right\|^2_2 \right)$ evaluates the relative error between the inferred model parameter and the exact one in the simulated data, and $\exp( \mathcal{E}_{\pi_{M,t_L,C_{\eta}}})$ evaluates the particle diversity.
Note that this optimization problem is nonconvex; to ease the computational burden, we can find an approximation to $C^*_{\eta}$ by optimizing the above objective function on a finite-size grid. 

\begin{remark}
	Even though we only search for the optimal hyperparameter in a finite set, the above optimization problem still requires a lot of computational resources, because for each $(\tilde \kappa_{\ell}, \tilde x_\ell(\cdot),\tilde y_\ell(\cdot))$ it needs to generate many simulated trajectories to calculate the filter.
    We argue that these computational resources spent on hyperparameter training are worthwhile. 
    First, the optimal hyperparameter can make the filter more accurate than that using a random $C_{\eta}$.  
    Second, the training is performed off-line, so it will not increase the computational burden of the RPF when solving a filtering problem online.
\end{remark}

\section{Numerical examples}{\label{Sec. numerical examples}}

In this section, we illustrate our approach using several numerical examples on different scales and show that the proposed regularized particle filter can significantly outperform the SIR counterpart.
The code applied to perform the analysis is available on GitHub: ``https://github.com/ZhouFang92/Regularized-particle-filter-for-stochastic-reaction-networks".

\subsection{Antithetic integral feedback controller}

The antithetic integral feedback (AIF) controller is a biological module that can achieve robust perfect adaptation for output species in arbitrary noisy bimolecular networks (see \cite{aoki2019universal,briat2016antithetic}). 
In this example, we consider such a gene circuit (see \Cref{fig aic}) that has six reactions (see \Cref{fig aic_chemical_equations}) and four species: the control input species (denoted by $S_1$), the sensing species ($S_2$), the regulated species $(S_3)$, and the report species $(S_4)$.
Here, the dynamics are assumed to follow mass-action kinetics. 
The first three reactions in \Cref{fig aic_chemical_equations} result in an ``integral controller" where the integrator is $\mathbb E\left[X_1(t)-X_2(t)\right]$ and ensures that the mean copy number of $S_3$ at the stationary distribution (if it exists) is ${k_1}/{k_2}$ regardless of any other parameters \cite{briat2016antithetic}.
In an \textit{E. coli} cell, such a network can be realized by choosing the sigma factor and anti-sigma factor as $S_1$ and $S_2$, respectively \cite{aoki2019universal}.
Here, we consider a filtering problem for this system in which $S_4$ is a fluorescent reporter and measured under a microscope, and our goal is to infer the dynamics of the regulated species ($S_3$) and its set point $k_1/k_2$.

\begin{figure}[htbp]
	\centering
	\subfigure[Gene circuit: $S_4$ is recorded.]{                 
		\centering                                                         
		\includegraphics[width= 0.45\textwidth]{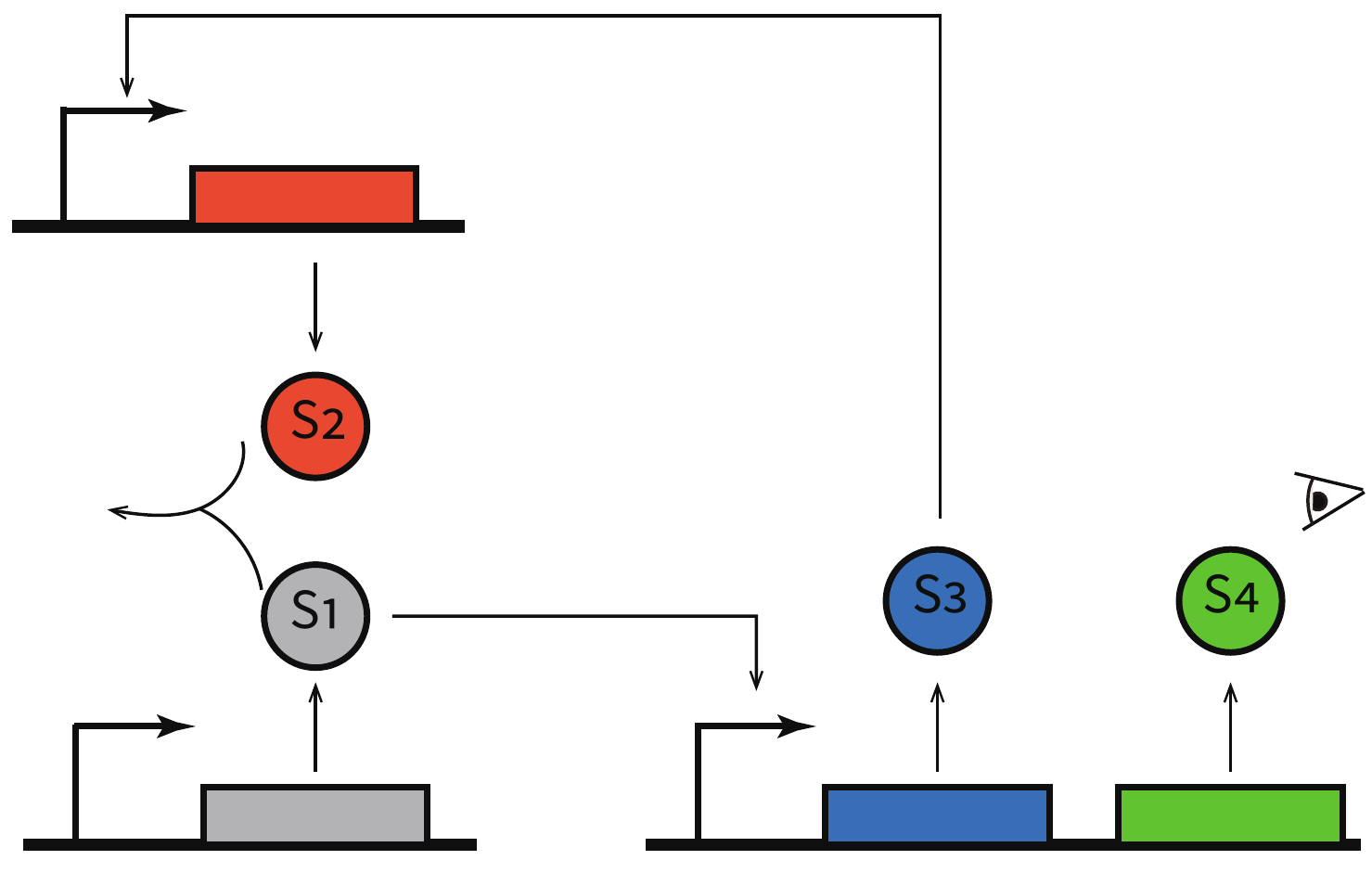}   
		\label{fig aic}            
	}
	\subfigure[Chemical reactions]{
		\vbox{\hsize=0.45\hsize 
			\begin{align*}
			&\emptyset \stackrel{k_1}{\longrightarrow} S_1 && \text{reference} \\
			&S_3  \stackrel{k_2}{\longrightarrow} S_2+ S_3 && \text{measurement} \\
			&S_1+S_2 \stackrel{k_3}{\longrightarrow} \emptyset && \text{comparison} \\
			&S_1  \stackrel{k_4}{\longrightarrow} S_1+S_3 +S_4 && \text{actuation} \\
			&S_3  \stackrel{k_5}{\longrightarrow} \emptyset      && \text{degradation}\\
			&S_4  \stackrel{k_6}{\longrightarrow} \emptyset      && \text{degradation}
			\end{align*}}
		\label{fig aic_chemical_equations}
	}
	\caption{{\bf Antithetic integral feedback controller.} (a) The diagram of an antithetic integral feedback controller  where $S_1$ and $S_2$ form an integral controller with the integrator $\mathbb E\left[X_1(t)-X_2(t)\right]$ ensuring the mean copy number of $S_3$ to be constant at the stationary distribution.   
		$S_4$ is the report species that is visible and measured by a microscope.
		(b) The chemical reactions contained in the antithetic integral feedback controller. Its dynamics is assumed to follow mass-action kinetics with reaction constants $k_i$ ($i=1,\dots,6$). 
		With these parameters, the set point of the regulated species $S_3$ is $k_1/k_2$. 
	}
	\label{fig. AIC_four_species}
\end{figure}

\begin{table}[htbp]
	\centering
	\begin{tabular}{|l|l|l|l|}
		\hline
		\multicolumn{1}{|c|}{Model parameter} &  \multicolumn{1}{|c|}{Exponent} &  \multicolumn{1}{|c|}{Rescaled parameter}  & \multicolumn{1}{|c|}{Initial condition}\\
		\multicolumn{1}{|c|}{$\left(\text{min}^{-1}\right)$} & & \multicolumn{1}{|c|}{$\left(\text{min}^{-1}\right)$}  & 
		\\
		\hline
		$k_1\sim \mathcal U (0.1, 1)$ & $\beta_1=0$ & $k'_1\sim\mathcal U (0.1, 1)$ &
		$X^N_1(0) \sim \mathcal{UI}(0,10)$~~~~ \\
		$k_2\sim \mathcal U (0.1, 1)$ & $\beta_2=0$ & $k'_2\sim\mathcal U (0.1, 1)$ &
		$X^N_2(0)\sim \mathcal{UI}(0,10)$~~~~ \\
		$k_3=0.2$~~ & $\beta_3=0$ & $k'_3=0.2$  &
		$X^N_3(0)\sim \mathcal{UI}(0,20)$~~~~ \\
		$k_4=0.5$ &  $\beta_4=0$ & $k'_4=0.5$  &
		$X^N_4(0)\sim \mathcal{UI}(0,20)$~~~~ \\
		$k_5=0.7$ &  $\beta_5=0$ & $k'_5=0.7$  & \\
		$k_6=0.3$ &  $\beta_6=0$ & $k'_6=0.3$  & \\
		\hline
	\end{tabular}
	\caption{\noindent {\bf Model parameters of the AIF controller.} $\mathcal{U}$ is the uniform distribution, and $\mathcal{UI}$ is the integer uniform distribution.}
	\label{table ex_AIC model_four_species parameters}
\end{table}

{
\begin{figure}[htbp]
	\centering
	\subfigure[Particle filters for the AIF controller.]{
		\centering
		\includegraphics[width= 0.9 \textwidth]{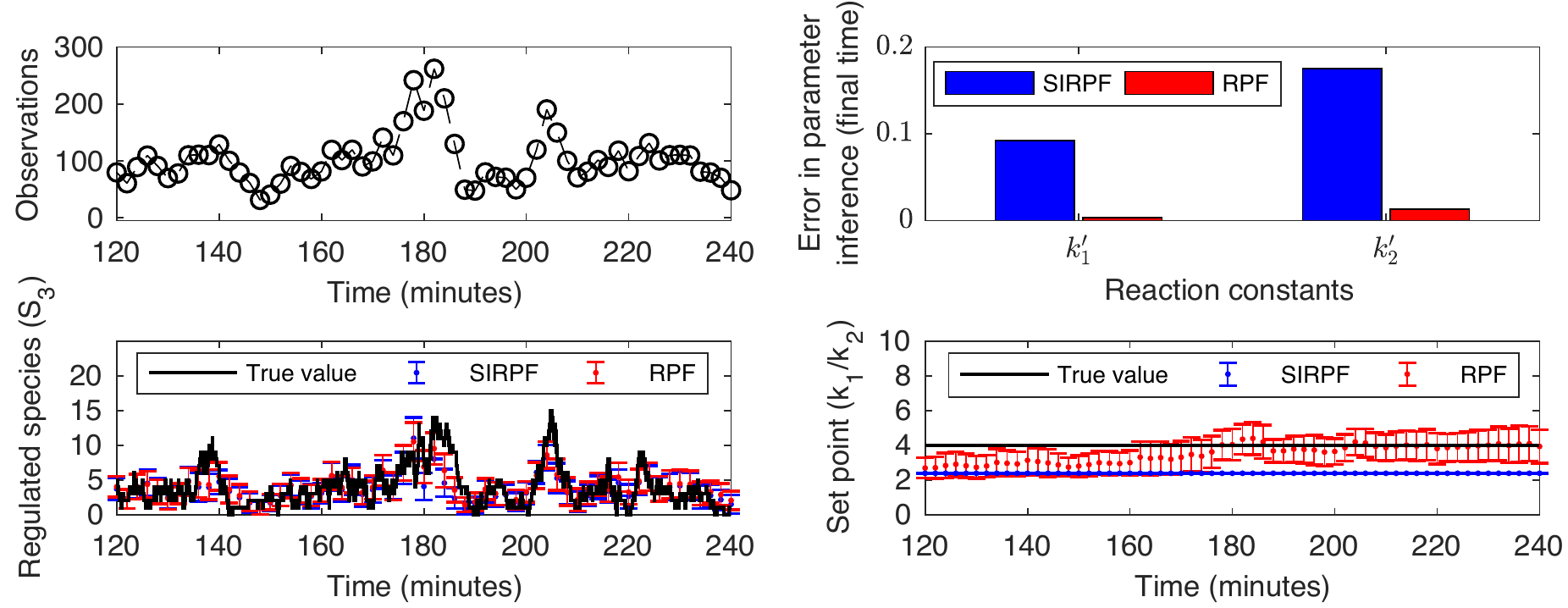}
		\label{fig aic_trajectory}
	}
	\subfigure[The distribution of particles at the final time.]{                 
		\centering                                                         
		\includegraphics[height= 0.2\textheight]{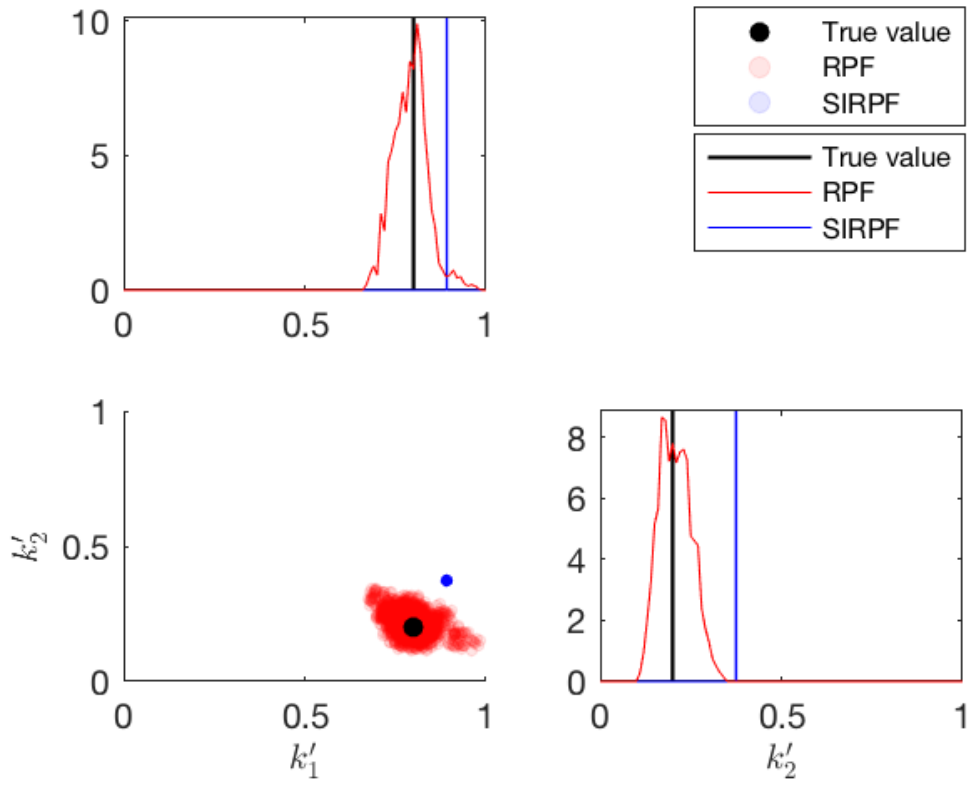}   
		\label{fig aic_final_particles}            
	}
	\hfill
	\subfigure[The evolution of particles over time.]{                 
		\centering                                                         
		\includegraphics[height= 0.2 \textheight]{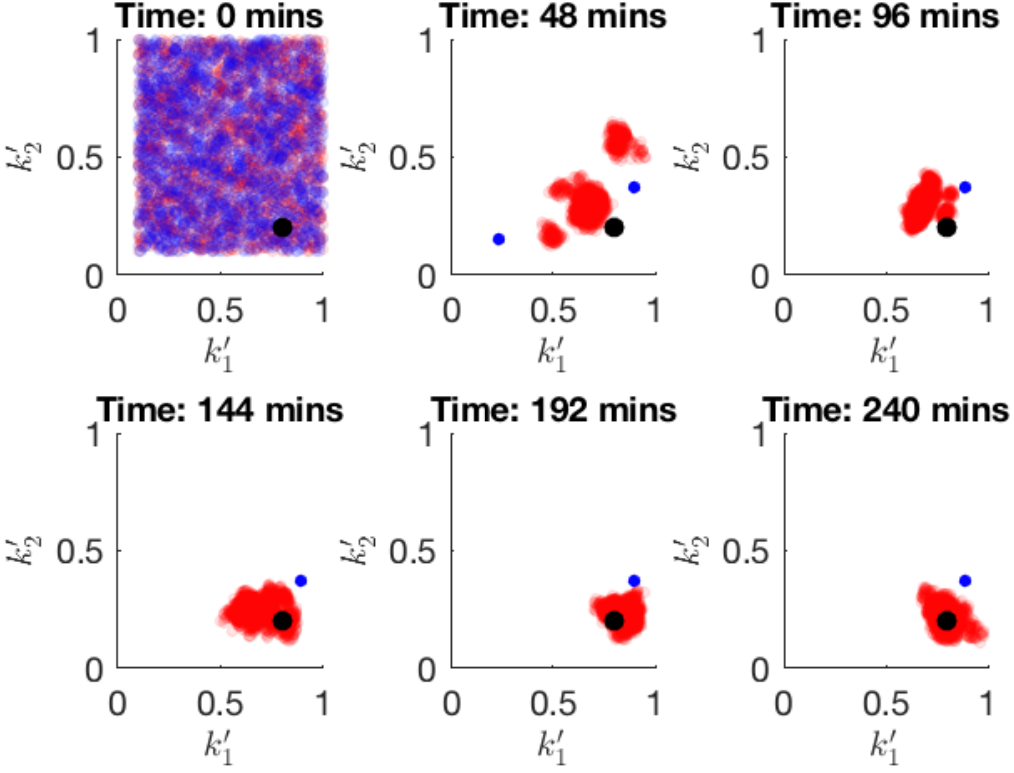}      
		\label{fig aic_particle_evolution}                
	}
	\caption{{\bf Numerical results of the AIF controller.} ``RPF" stands for the regularized particle filter, and ``SIRPF" stands for the sequential importance resampling particle filter. 
		Panel (a) shows the performance of both filters in estimating the regulated species $S_3$ and its set point $k_1/k_2$. 
		It tells that both filters can infer the dynamics of $S_3$ accurately, but the RPF has a much better performance in estimating the set point and model parameters. 
		Panel (b) presents the final time distribution of the particles associated with these filters in the parameter coordinate, and panel (c) shows the evolution of these particles over time.   
		These two panels tell that sample degeneracy happens to the SIRPF very early in time (around 48 minutes), and, therefore, the SIRPF cannot adjust its estimate of model parameters according to the new observations afterward.
		This is the reason why the SIRPF fails to infer model parameters accurately.
		In contrast, the particle diversity in the RPF is always large, and this enables the RPF to constantly refine its estimate of model parameters according to new observations.
		Consequently, the RPF can infer model parameters and the set point accurately. 
	}
	\label{fig aic_result}
\end{figure}

\begin{table}[htbp]
	\centering
	\begin{tabular}{|c|c|c|c|c|c|c|c|c|c|c|c|}
		\hline 
		$C_{\eta}$ & 0 & 0.2 & 0.4 & 0.6 & 0.8 & 1.0 & 1.2 & 1.4 & 1.6 & 1.8* & 2\\
		\hline
		Score   & 1.00 & 0.14& 0.11& 0.11& 0.11 & 0.10& 0.09& 0.09& 0.10& 0.08*& 0.09\\
		\hline
	\end{tabular}
	\caption{\noindent {\bf The performance of different $C_{\eta}$ (the AIF controller):} $t_L=240 ~(\rm{mins})$.}
	\label{table exAIC_four_species hyperparameters}
\end{table}
}

In this example, we set the scaling factor ($N$) to be 100 and the abundance coefficients ($\alpha_1$, $\alpha_2$, $\alpha_3$, and $\alpha_4$) to be zero, meaning that the cell system has very few copies of the involved species.
Also, we assume the model parameters to satisfy \Cref{table ex_AIC model_four_species parameters}, where all the rate constants are known except for $k_1$ and $k_2$.
The observations are collected every 2 minutes and satisfy
\begin{equation*}
Y^{\gamma_1}(t_i) = 10 X_4(t_i) \wedge 1000 + W(t_i) \qquad i\in \mathbb N_{>0},
\end{equation*}
in which $t_i=2i$, and $\{W(t_i)\}_{i\in\mathbb N_{>0}}$ is a sequence of mutually independent standard Gaussian random variables.
In this setting, we can calculate that $\gamma_1=0$, and the most proper RPF to solve this filtering problem is the one utilizing the exact simulation algorithm in the sampling step (i.e., the first filter in \Cref{def RPFs for SCRNs}). 
We set the particle population of the filter to be 2000 and trained the RPF following the rule \eqref{eq. performance of C eta}; see \Cref{table exAIC_four_species hyperparameters} for the training result.
Finally, we applied the RPF with the optimal hyperparameter to a simulated process and compared its result with the SIRPF; the numerical results are shown in \Cref{fig aic_result}.

\Cref{fig aic_trajectory} shows that both filters can infer the dynamics of the regulated species accurately; however, the RPF has a superior performance in estimating the model parameters and the set point 
thanks to the larger particle diversity of the RPF (\Cref{fig aic_final_particles} and \Cref{fig aic_particle_evolution} ).
Moreover, the sample degeneracy of the SIRPF occurs very early in time (\Cref{fig aic_particle_evolution}), suggesting that the SIRPF fails to represent the posterior of the model parameters very quickly after the starting time.  
All these facts suggest that the RPF outperforms the SIRPF in this numerical example.

\subsection{Gene transcription model}\label{subsection gene transcription model}
We then consider a gene transcription model (\Cref{fig. GT}) consisting of four reactions (see \Cref{fig GT_chemical_equations}) and three species: the inactivated gene (denoted by $S_1$), the activated gene ($S_2$), and the transcribed RNA ($S_3$).
Moreover, we assume that the RNA contains stem-loops recognized and bound by fluorescent reporters so that the produced RNAs can be visualized in a living cell. 
This model resembles the gene circuit constructed in \cite{rullan2018optogenetic} for investigating and controlling  transcriptional dynamics.
In this example, our goal is to infer both hidden dynamic states and model parameters from the history of the RNA signal. 

\begin{figure}[htbp]
	\centering
	\subfigure[The circuit: the RNA is recorded.]{                 
		\centering                                                         
		\includegraphics[width= 0.38\textwidth]{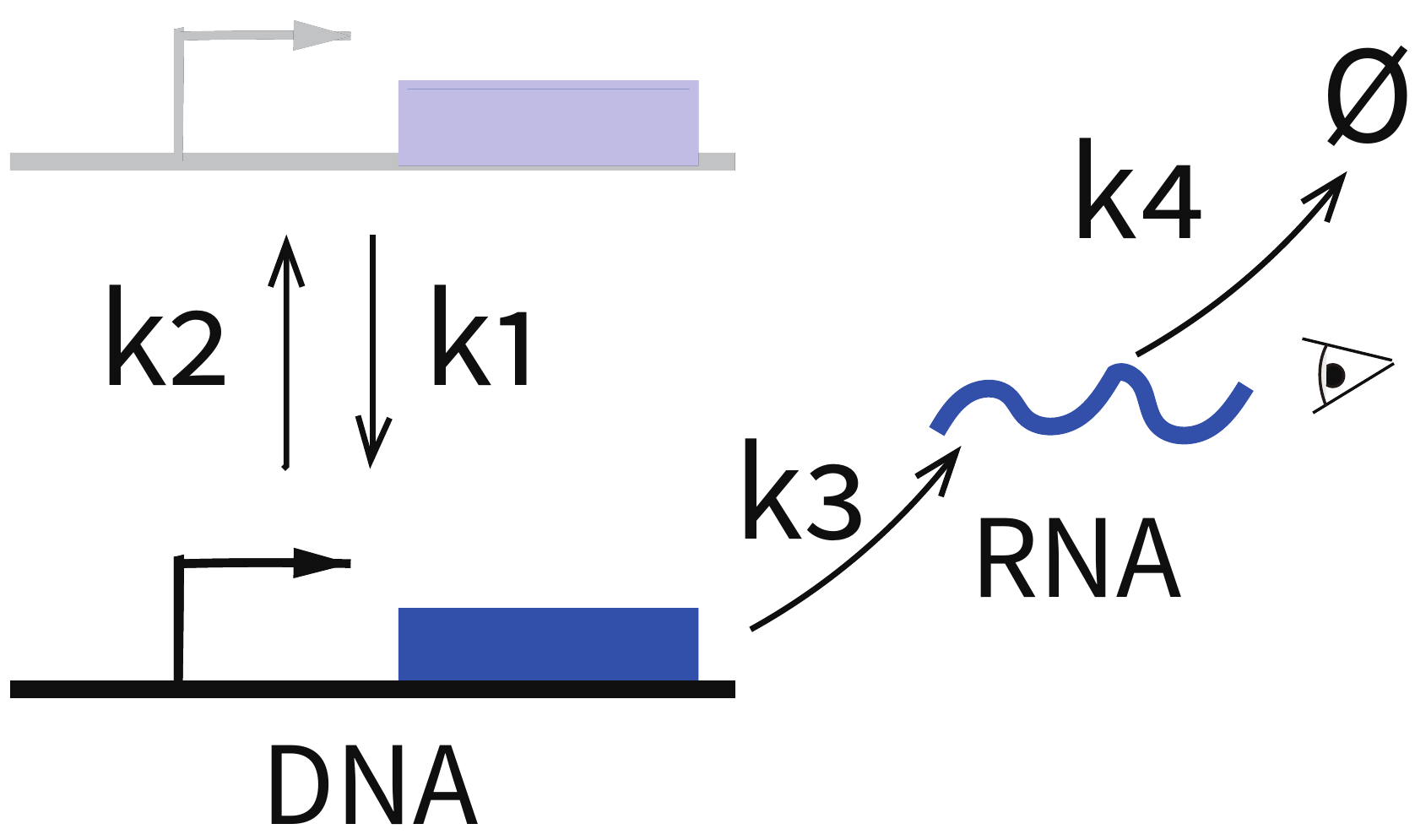}   
		\label{fig GT circuit}            
	}
	\hfill
	\subfigure[Chemical reactions]{
		\vbox{\hsize=0.5\hsize 
			\begin{align*}
			&S_1 \stackrel{k_1}{\longrightarrow} S_2 && \text{gene activation} \\
			&S_2  \stackrel{k_2}{\longrightarrow} S_1 && \text{gene inactivation} \\
			&S_2 \stackrel{k_3}{\longrightarrow} S_2+S_3 && \text{transcription} \\
			&S_3  \stackrel{k_4}{\longrightarrow} \emptyset      && \text{degradation}
			\end{align*}}
		\label{fig GT_chemical_equations}
	}
	\caption{{\bf Gene transcription model.} (a) The diagram of a gene transcription circuit where the gene has ``on" and ``off" states and transcribes message RNAs only in the ``on" state.
	The message RNAs are visible, and their light intensity signal is measured by a microscope. 
	(b) The chemical reactions contained in the gene transcription model, where $S_1$, $S_2$, and $S_3$ are the inactivated gene, the activated gene, and the message RNA, respectively. All the reactions follow mass-action kinetics with reaction constants $k_i$ ($i=1,\dots,4$).
    }
	\label{fig. GT}
\end{figure}

\begin{table}[htbp]
	\centering
	\begin{tabular}{|l|c|c|l|}
		\hline
		\multicolumn{1}{|c|}{Model parameter} & Exponent & Rescaled parameter  & \multicolumn{1}{|c|}{Initial condition}\\
		$\left(\text{min}^{-1}\right)$ & & $\left(\text{min}^{-1}\right)$  & 
		\\
		\hline
		$k_1\sim \mathcal U (0.03, 0.3)$ & $\beta_1=0$ & $k'_1\sim\mathcal U (0.03, 0.3)$ &
		$X^N_1(0) \sim$ Bern($1/2$)~~~~ \\
		$k_2\sim \mathcal U (0.03, 0.3)$ & $\beta_2=0$ & $k'_2\sim\mathcal U (0.03, 0.3)$ &
		$X^N_2(0)=1-X^N_1(0)$~~~~ \\
		$k_3\sim \mathcal U (10, 100)$~~ & $\beta_3=1$ & $k'_3\sim\mathcal U (0.2, 2)$~~~~  &
		$X^N_3(0) \sim \frac{X^N_2(0)\text{Poiss(50)}}{50}$ \\
		$k_4\sim \mathcal U (0.2, 2)$~~~~ & $\beta_4=0$ & $k'_4\sim\mathcal U (0.2, 2)$~~~~ & \\
		\hline
	\end{tabular}
	\caption{\noindent {\bf Model parameters of the gene transcription model.} $\mathcal U$ is the uniform distribution, ``{\rm{Bern($\cdot$)}}" is the Bernoulli distribution, and ``{\rm{Poiss}}" is the Poisson distribution.}
	\label{table exGT_tau-leaping model parameters}
\end{table}

\begin{figure}[htbp]
	\centering
	\subfigure[Particle filters for the gene transcription model.]{
		\centering
		\includegraphics[width= 0.9 \textwidth]{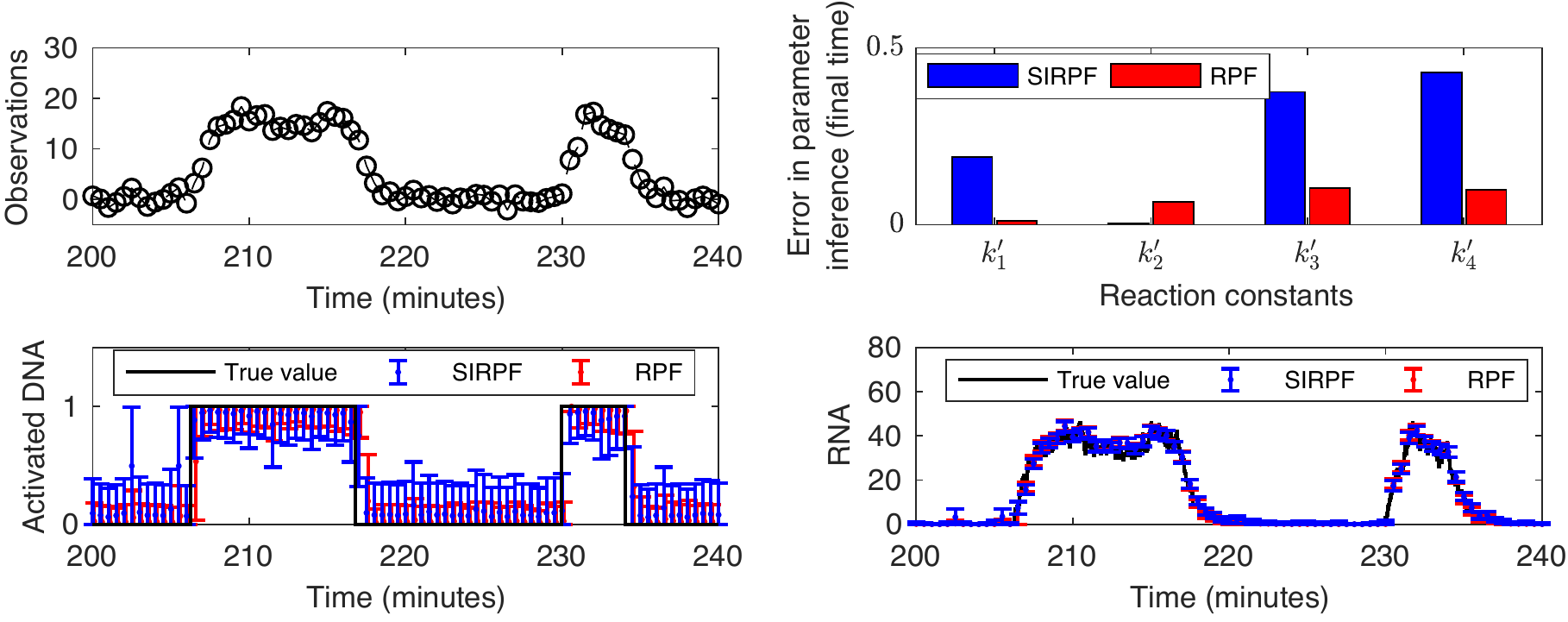}
		\label{fig GT_trajectory}
	}
	\subfigure[The distribution of particles at the final time.]{                 
		\centering                                                         
		\includegraphics[height= 0.2\textheight]{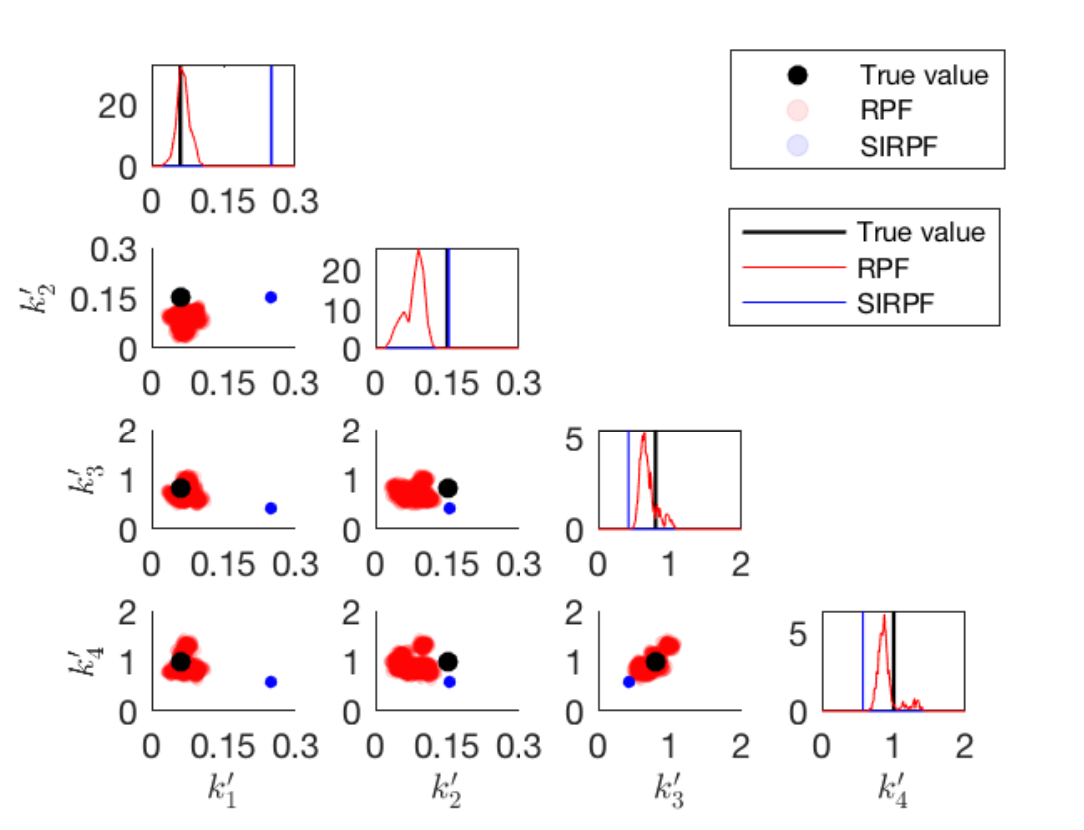}   
		\label{fig GT_final_particles}            
	}
	\hfill
	\subfigure[The evolution of particles over time.]{                 
		\centering                                                         
		\includegraphics[height= 0.2 \textheight]{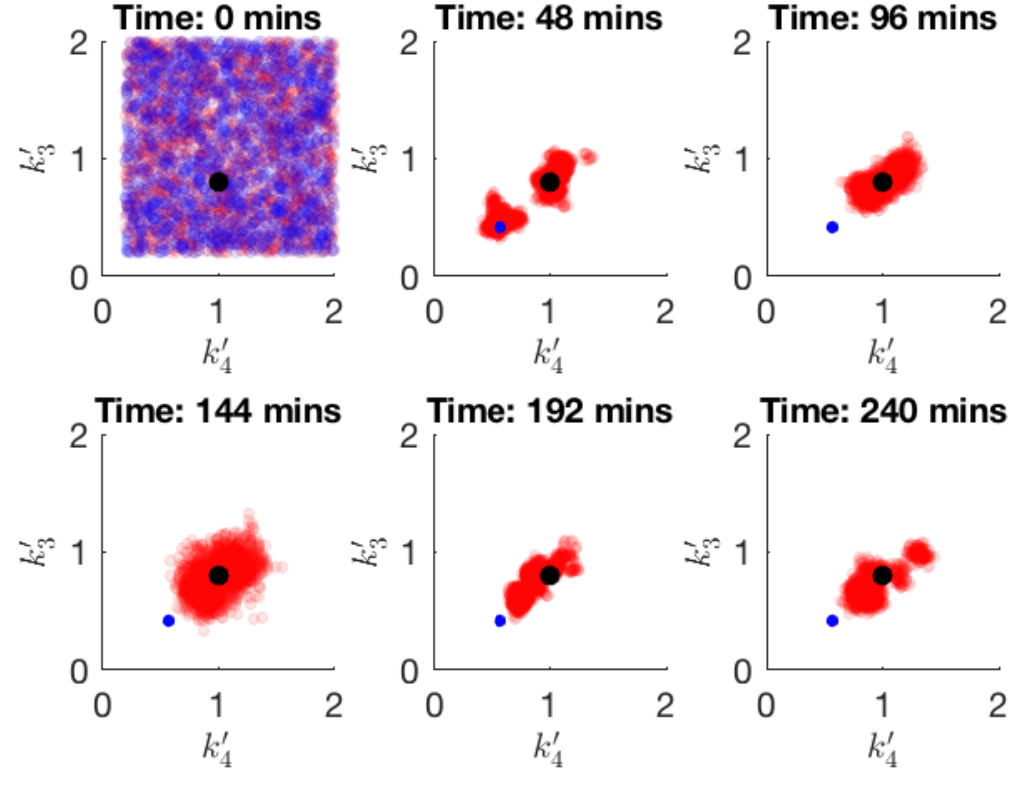}      
		\label{fig GT_particle_evolution}                
	}
	\caption{{\bf Numerical results of the gene transcription model.} ``RPF" stands for the regularized particle filter, and ``SIRPF" stands for the sequential importance resampling particle filter.
		Panel (a) shows the performance of two filters in estimating the system states and model parameters.
		Specifically, both filters infer the dynamics of DNAs and RNAs accurately, but the RPF has a better performance in estimating model parameters. 
		Panel (b) presents the final time distribution of the particles associated with these filters in the parameter coordinate, and panel (c) shows the time evolution of these particles. 
		Similar to the previous example, these two panels tell that sample degeneracy happens to the SIRPF very early in time (before 48 minutes), which disables the SIRPF to adjust its estimate of model parameters according to the new observations afterward. 
		In contrast, the RPF always has a large particle diversity and is able to refine its estimate of model parameters over time. 
		This explains why the RPF outperforms the SIRPF in parameter inference. 
		However, note that the RPF does not infer every parameter accurately; its estimate of $k'_2$ somehow deviates from the exact value (though the deviation is not too large). 
	}
	\label{fig GT_result}
\end{figure}

\begin{table}[htbp]
	\centering
	\begin{tabular}{|c|c|c|c|c|c|c|c|c|c|c|c|}
		\hline 
		$C_{\eta}$ & 0 & 0.1 & 0.2 & 0.3 & 0.4 & 0.5 & 0.6 & 0.7 & 0.8 & 0.9* & 1\\
		\hline
		Score   & 1.08 & 0.17& 0.13& 0.10& 0.10& 0.09 & 0.09& 0.09& 0.09& 0.08*& 0.09\\
		\hline
	\end{tabular}
	\caption{\noindent {\bf The performance of different $C_{\eta}$ (the gene transcription model):} $t_L=240 ~(\rm{mins})$.}
	\label{table exGT_tau-leaping hyperparameters}
\end{table}

For the considered gene transcription model, we set $N=50$, $\alpha_1=\alpha_2=0$, and $\alpha_3=1$, meaning that the system contains very few copies of DNAs but tens of mRNA copies.
Also, we assume the dynamics of the system to follow mass-action kinetics and the model parameters to satisfy \Cref{table exGT_tau-leaping model parameters} where all reaction constants are unknown. 
The observations are measured every 0.5 minute and satisfy
\begin{equation*}
Y^{\gamma_1}(t_i) = 20 X^{N,\gamma_1}_3(t_i) \wedge 1000 + W(t_i) \qquad i\in \mathbb N_{>0},
\end{equation*}
where $\gamma_1=0$, $t_i=0.5i$, 1000 is the measurement range, and $\{W(t_i)\}_{i\in\mathbb N_{>0}}$ is a sequence of mutually independent standard Gaussian random variables. 
Here, we utilized the tau-leaping based RPF to solve the associated filtering problem.
Again, we set the particle population to be 2000 and trained the filter following the rule \eqref{eq. performance of C eta}; see \Cref{table exGT_tau-leaping hyperparameters} for the training results.
Finally, we applied the filter to a simulated process, whose results are present in \Cref{fig GT_result}.

Similar to the conclusion obtained in the previous subsection, \Cref{fig GT_result} shows that the RPF has a superior performance in estimating model parameters due to its larger particle diversity, and the sample degeneracy of the SIRPF occurs very early in time. 
These facts imply that RPF outperforms the SIRPF in this example.
However, we note that the RPF does not infer every parameter accurately (see $k'_2$ in \Cref{fig GT_final_particles}); this might be due to the fact that particle population is not large enough in this example.
Also, both filters have similar performances in estimating the gene dynamics (\Cref{fig GT_trajectory}) due to the simple structure of the system: once the observation increases/decreases sharply, the filter can learn that the gene is turned on/off whatever its estimates of model parameters.

\subsection{Transcription regulation network}\label{subsection transcription regulation network}
Finally, we consider a transcription regulation network (\Cref{fig. TR}) where the gene product inhibits the gene expression process by binding to the DNA molecule.
This system consists of six reactions (see \Cref{fig TR_chemical_equations}) and four chemical species: the inactivated gene (denoted by $S_1$), the activated gene ($S_2$), the mRNA ($S_3$), and the protein ($S_4$).
Also, we assume that the protein is fluorescent and, therefore, can be measured by a microscope.
In this example, our goal is to infer both the hidden dynamic states and model parameters from the time-course data.
\begin{figure}[htbp]
	\centering
	\subfigure[The circuit: the protein is recorded.]{                 
		\centering                                                         
		\includegraphics[width= 0.45\textwidth]{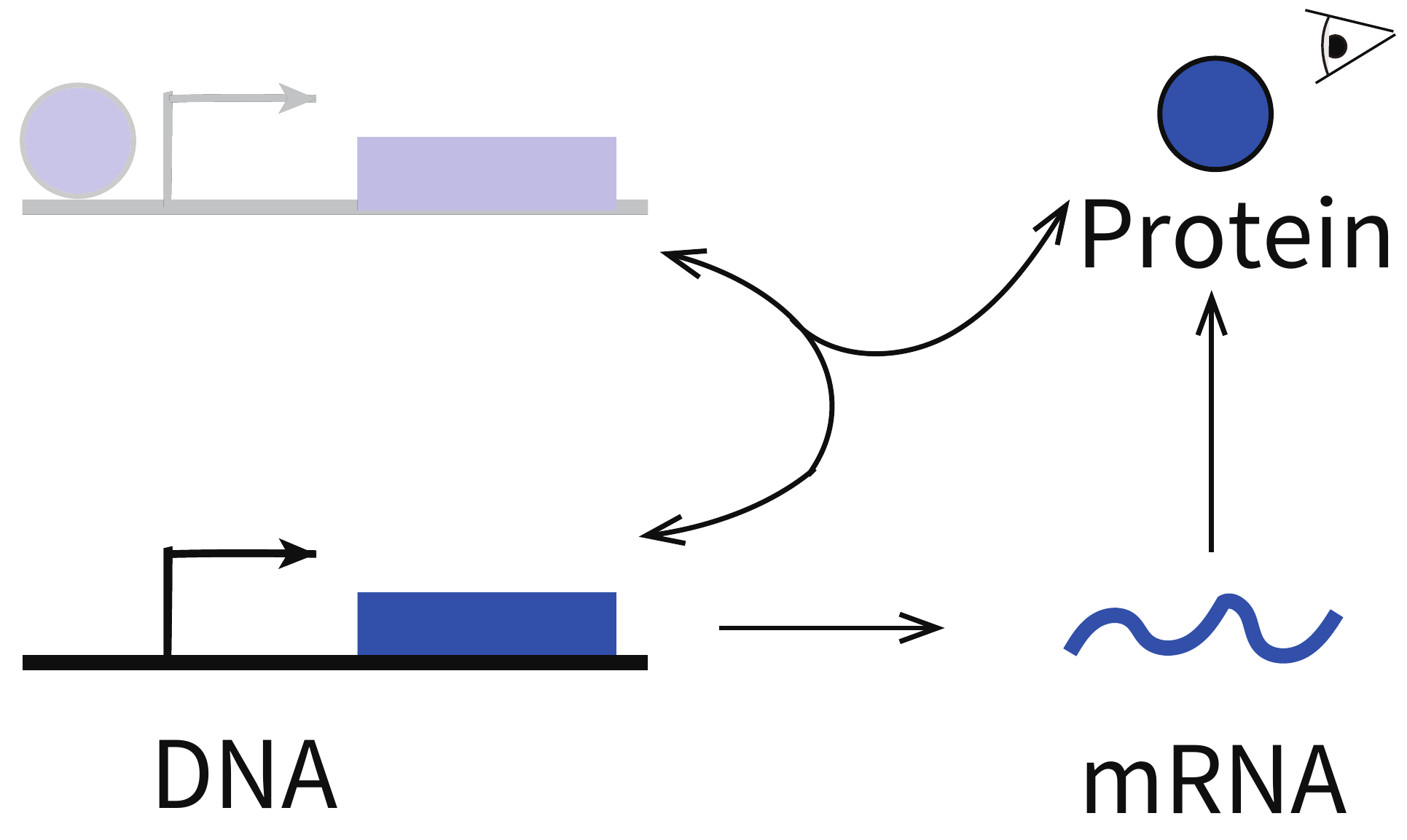}   
		\label{fig TR circuit}            
	}
	\hfill
	\subfigure[Chemical reactions]{
		\vbox{\hsize=0.45\hsize 
			\begin{align*}
			&S_2+S_4 \stackrel{k_1}{\longrightarrow} S_1 && \text{gene inactivation} \\
			&S_1  \stackrel{k_2}{\longrightarrow} S_2+S_4 && \text{gene activation} \\
			&S_2 \stackrel{k_3}{\longrightarrow} S_2+S_3 && \text{transcription} \\
			&S_3  \stackrel{k_4}{\longrightarrow} \emptyset      && \text{degradation} \\
			&S_3 \stackrel{k_5}{\longrightarrow} S_3+S_4 && \text{translation} \\
			&S_4 \stackrel{k_6}{\longrightarrow} \emptyset  && \text{degradation} 
			\end{align*}}
		\label{fig TR_chemical_equations}
	}
	\caption{{\bf Transcription regulation network.} (a) The diagram of a transcription regulation network, where the gene product (the protein) inhibits transcription by silencing the gene. The protein contains a fluorescent tag so that it is visible and measured by a microscope. 
	(b) The chemical reactions contained in the transcription regulation network, where $S_1$, $S_2$, $S_3$, and $S_4$ are the inactivated gene, the activated gene, the mRNA, and the protein, respectively. 
	All the reactions follow the mass-action kinetics with reaction constants $k_i$ ($i=1,\dots, 6$).
    }
	\label{fig. TR}
\end{figure}

\begin{table}[h]
	\centering
	\begin{tabular}{|l|c|c|l|}
		\hline
		\multicolumn{1}{|c|}{Model parameter} & Exponent & Rescaled parameter  & \multicolumn{1}{|c|}{Initial condition}\\
		$\left(\text{min}^{-1}\right)$ & & $\left(\text{min}^{-1}\right)$  & 
		\\
		\hline
		$k_1\sim \mathcal U (0.03, 0.3)$ & $\beta_1=0$ & $k'_1\sim\mathcal U (0.03, 0.3)$ &
		$X^N_1(0) \sim$ Bern($1/2$)~~~~ \\
		$k_2\sim \mathcal U (0.03, 0.3)$ & $\beta_2=0$ & $k'_2\sim\mathcal U (0.03, 0.3)$ &
		$X^N_2(0)=1-X^N_1(0)$ \\
		$k_3\sim \mathcal U (0.1, 1)$~~ & $\beta_3=0$ & $k'_3\sim\mathcal U (0.1, 1)$~~~~  &
		$X^N_3(0) \sim \text{Poiss(10)}$ \\
		$k_4\sim \mathcal U (0.1, 1)$~~~~ & $\beta_4=0$ & $k'_4\sim\mathcal U (0.1, 1)$~~~~ & 
		$X^N_4(0) \sim \frac{\text{Poiss}(10^4)}{10^4}$\\
		$k_5\sim \mathcal U (10^3, 10^4)$~~~~ & $\beta_5=1$ & $k'_5\sim\mathcal U (0.1, 1)$~~~~ & \\
		$k_6\sim \mathcal U (0.1, 1)$~~~~ & $\beta_6=0$ & $k'_6\sim\mathcal U (0.1, 1)$~~~~ & \\
		\hline
	\end{tabular}
	\caption{\noindent {\bf Model parameters of the transcription regulation model}:  $\mathcal U$ is the uniform distribution, ``Bern($\cdot$)" is the Bernoulli distribution, and ``Poiss" is the Poisson distribution.}
	\label{table TR model parameters}
\end{table}

\begin{table}[htbp]
	\centering
	\begin{tabular}{|c|c|c|c|c|c|c|c|c|c|c|c|}
		\hline 
		$C_{\eta}$ & 0 & 0.2 & 0.4 & 0.6 & 0.8 & 1.0 & 1.2 & 1.4 & 1.6 & 1.8* & 2\\
		\hline
		Score   & 1.29 & 0.41& 0.37& 0.34& 0.37& 0.36 & 0.36& 0.36& 0.35& 0.33*& 0.36\\
		\hline
	\end{tabular}
	\caption{\noindent {\bf The performance of different $C_{\eta}$ (the transcription regulation network):} $t_L=240 ~(\rm{mins})$.}
	\label{table exTR_PDMP hyperparameters}
\end{table}

\begin{figure}[htbp]
	\centering
	\subfigure[Particle filters for the gene transcription model.]{
		\centering
		\includegraphics[width= 0.9 \textwidth]{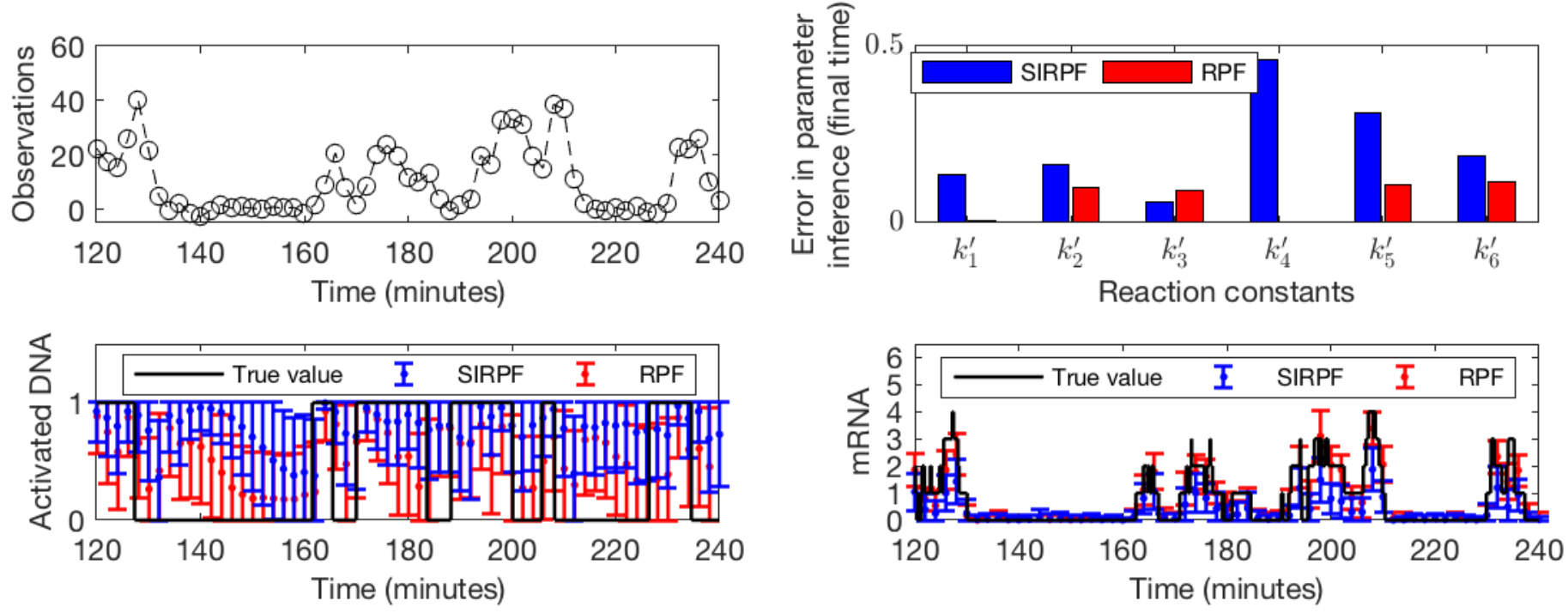}
		\label{fig TR_trajectory}
	}
	\subfigure[The distribution of particles at the final time.]{                 
		\centering                                                         
		\includegraphics[width= 0.45\textwidth]{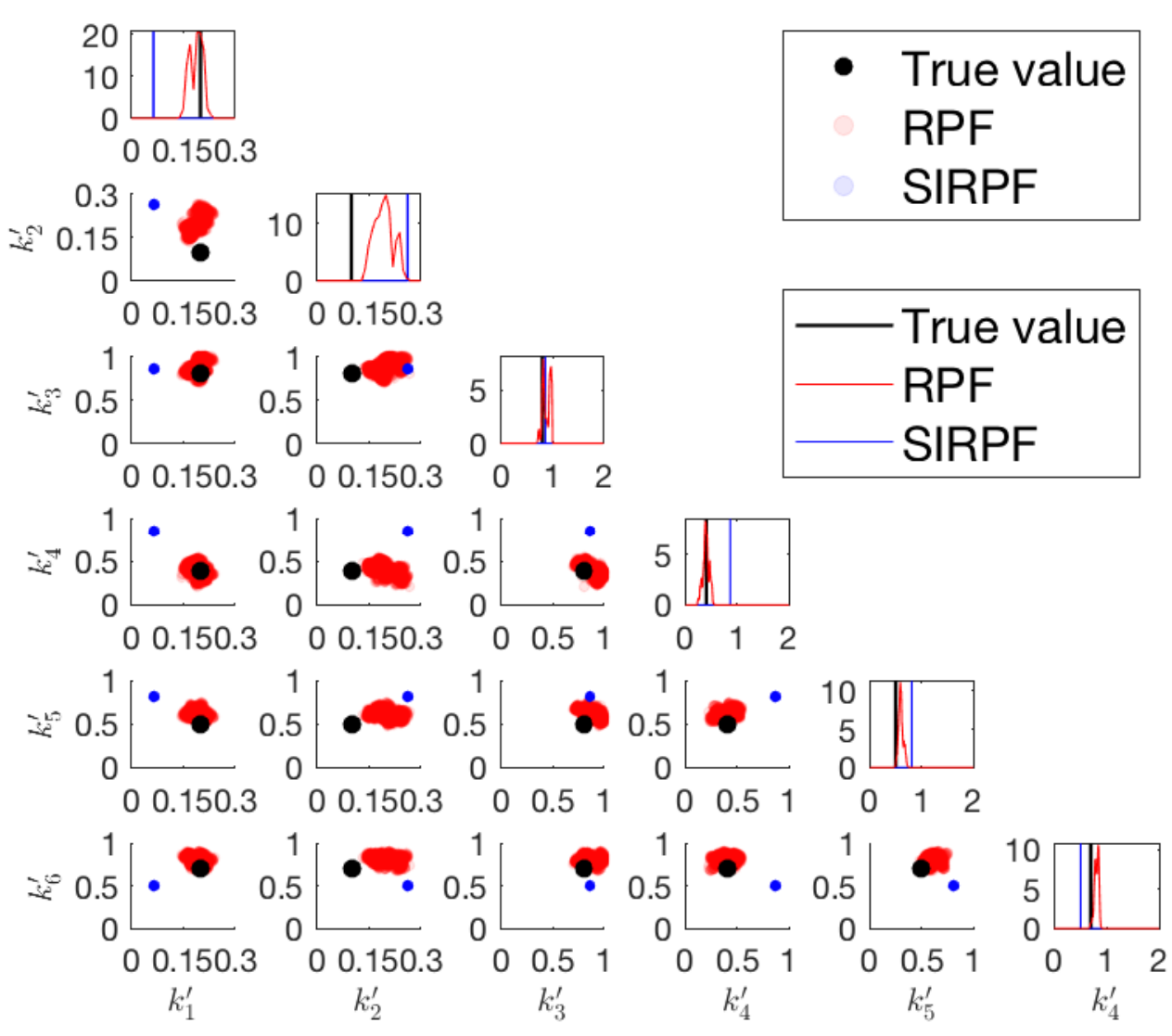}   
		\label{fig TR_final_particles}            
	}
	\hfill
	\subfigure[The evolution of particles over time.]{                 
		\centering                                                         
		\includegraphics[width= 0.4\textwidth]{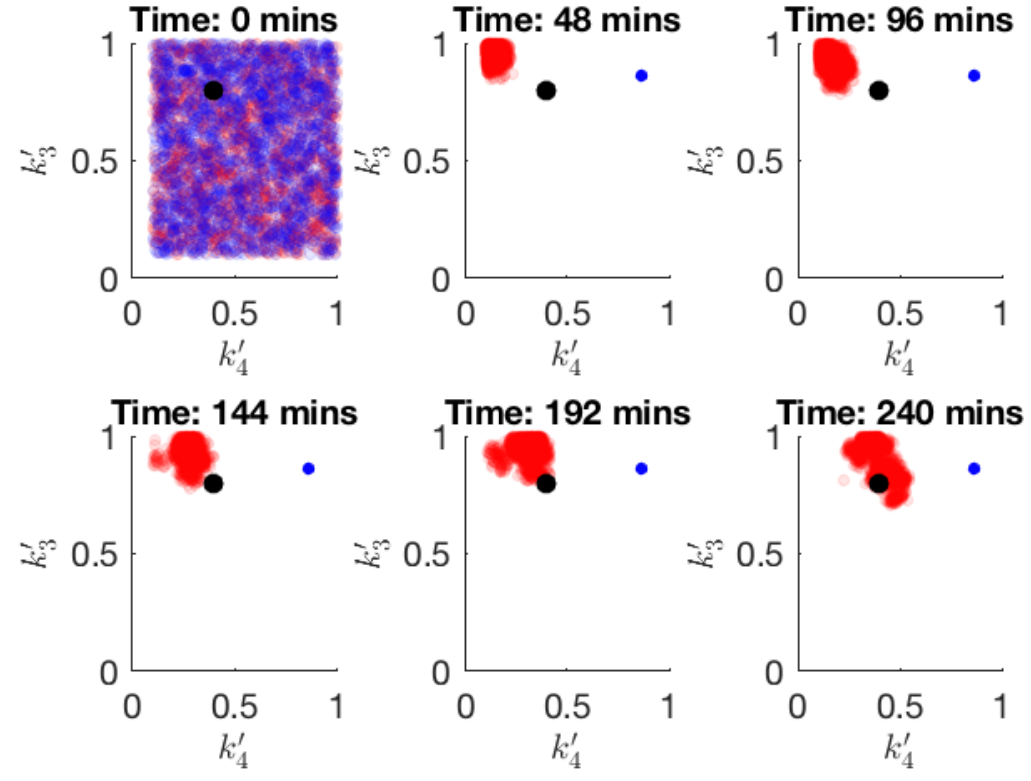}      
		\label{fig TR_particle_evolution}                
	}
	\caption{{\bf Numerical results of the gene transcription model.} ``RPF" stands for the regularized particle filter, and ``SIRPF" stands for the sequential importance resampling particle filter.
		Panel (a) shows the performance of both filters in estimating the hidden system states and model parameters.
		Specifically, RPF has a superior performance in estimating the dynamics of mRNAs and model parameters. (Note that the SIRPF underestimates the mRNA dynamics many times around 200 minutes.)
		However, both filters fail to infer the DNA states accurately.
		Panel (b) shows the final time distribution of the particles associated with these filters in the parameter coordinate, and panel (c) shows the time evolution of these particles. 
		Similar to the previous examples, these two panels tell that sample degeneracy happens to the SIRPF very early in time (before 48 minutes), which disables the SIRPF to adjust its estimate of model parameters according to the new observations afterward. 
		In contrast, the RPF always has a large particle diversity and is enable to refine its estimate of model parameters to approach the exact value. 
		This is the reason why the RPF outperforms the SIRPF in estimating model parameters. 
		Also, note that the RPF does not infer every parameter very accurately; its estimate of $k'_2$ (the reaction constant for gene activation) somehow deviates from the exact value. 
	}
	\label{fig TR_result}
\end{figure}

The modeling of the systems is as follows. We set $N=10^4$, $\alpha_1=\alpha_2=\alpha_3=0$, and $\alpha_4=1$; in other words, the system has very few copies of DNA molecules and mRNA molecules but a large number of proteins. 
Also, we assume that the system dynamics to follow mass-action kinetics and the model parameters to satisfy \Cref{table TR model parameters}. 
Based on it, we can easily calculate that the first timescale $\gamma_1$ equals to 0.
The observations are collected every 2 minutes and satisfy 
\begin{equation*}
Y^{\gamma_1}(t_i) = 20 X^{N,\gamma_1}_4(t_i) \wedge 1000 + W(t_i) \qquad i\in \mathbb N_{>0},
\end{equation*}
where $t_i=2i$, 1000 is the measurement range, and $\{W(t_i)\}_{i\in\mathbb N_{>0}}$ is a sequence of mutually independent standard Gaussian random variables.
In this setting, the most proper filter for the considered system is the RPF using PDMPs in sampling (the third one in \Cref{def RPFs for SCRNs}).
Again, we set the particle population to be 2000 and trained the filter following the rule \eqref{eq. performance of C eta}; see \Cref{table exTR_PDMP hyperparameters} for the training results.
Finally, we applied the filter to a simulated process, whose results are present in \Cref{fig TR_result}.

Similar to the previous examples, the RPF outperforms the SIRPF in this example in estimating model parameters and the mRNA dynamics\footnote{The SIRPF underestimates the mRNA dynamics for many times; see the result around 200 minutes.} due to its large particle diversity  (\Cref{fig TR_result}). 
Moreover, even though the initial estimate is not very accurate, the particles in the RPF can gradually approach the exact value of the model parameters and, consequently, provide an accurate estimate of the model parameters at the final time (\Cref{fig TR_particle_evolution}). 

Despite these successes, we can observe that both filters fail to precisely estimate the gene state (\Cref{fig TR_trajectory}).
This is attributed to the long time delay between the gene dynamics and the protein dynamics, which makes the exact posterior of the gene state very noisy.
Inspired by this observation, we argue that given an arbitrary gene circuit, it is not reasonable to expect the filter to infer every species accurately, especially those to whose changes the fluorescent reporters respond slowly.

\section{Conclusion}\label{conclusion}

The simultaneous inference of hidden dynamic states and model parameters of a stochastic reaction network is an important and difficult task.
One challenge lies in the high computational complexity of the exact simulation method for the stochastic reaction dynamics, which makes the particle filter, a simulation-based algorithm, computationally inefficient. 
Moreover, the sample degeneracy occurring in conventional particle filters has a negative impact on the accuracy of the estimate.
To tackle these problems, this paper introduced regularized particle filters (RPFs) for stochastic reaction networks (SRNs) on different scales using computationally efficient simulation algorithms.
Specifically, the filter chooses the proper simulation algorithm for sampling based on the scale of the system considered so that the computational time is acceptable (\Cref{subsubsection tau-leaping and PDMP} and \cref{def RPFs for SCRNs}).
Also, an artificial evolution step is executed after resampling to mitigate sample degeneracy and cope with the fluctuation of model parameters in practical systems (\cref{remark artificial evolution}).
Furthermore, by parameter sensitivity analyses, we showed that the constructed RPF  is convergent to the exact filter for a broad class of SRNs (\cref{thm main} and \cref{remark covered SCRNs}).
Finally, several numerical examples were shown to illustrate the superior performance of the constructed RPF compared with its sequential-importance-resampling (SIR) counterpart. 

From the numerical examples, we can observe that the constructed RPF also has some drawbacks; for instance, not every parameter can be inferred accurately when the number of unknown parameters is large.
This suggests that the filter does not scale very well with the system dimension. 
A possible solution to this problem is the Rao-Blackwellization method \cite{murphy2001rao}, in which some variables are marginalized out so that the filter only needs to simulate a model of reduced dimensionality. 
For mass-action networks, such a scalable filter has been provided in \cite{zechner2014scalable}, where the model parameters are integrated out by assuming their priors to be Gamma distributions. 
However, the theory for general kinetics is still undeveloped, and the combination of it with the RPF also remains to be investigated. 
Furthermore, the numerical examples indicate that the posterior of some hidden dynamic variables can be either very noisy or extremely narrow disregarding the model parameters  (see the discussion in \Cref{subsection gene transcription model} and \Cref{subsection transcription regulation network}). 
This means that these variables are either impossible to infer accurately or can be learned in a more clever way with biological insights into the system. 
Consequently, eliminating these variables can be another option to further reduce the system and improve the filter.
Lastly, it is known that unweighted particle filters can avoid the curse of dimensionality in some circumstances \cite{surace2019avoid}; therefore, it is worth applying this methodology to biological filtering problems. 
However, conventional unweighted particle filters (e.g., the particle flow method \cite{daum2008particle} and feedback particle filters \cite{yang2013feedback}) are designed for systems of continuous state spaces. 
For SRNs whose state space is discrete, one needs to significantly modify the theory and algorithms.

\appendix

\section{The proof of \cref{thm main}}\label{section proof}

We first outline the strategy to prove our main result.
Note that the error of a particle filter comes from model reduction, finite sampling, resampling, and artificial evolution.
Here, we can easily analyze the error introduced by model reduction using the convergence of reduced models and investigate the error caused by finite sampling and resampling using the tools in  \cite{del2000branching,crisan2001particle,bain2008fundamentals}.
Therefore, the key to proving our main result lies in the analysis of artificial evolution. 
In this appendix, we show that the error caused by artificial evolution is of the order $\frac{1}{\sqrt{M}}$ (where $M$ is the particle population) by the the parameter sensitivity analysis, and, thus, this error vanishes as the particle population goes to infinity. 
Finally, by combining all the error analysis results, we can prove the main theorem. 

To make the idea more explicit, we formulate the proof outline as follows.
First, since the model reduction is not the key factor in this proof, we simplify the notation by removing the superscript that specifies the dynamical model applied.
Specifically, for a given RPF constructed in \Cref{def RPFs for SCRNs}, we term
\begin{equation}\label{eq. notations}
	\begin{array}{lcl}
	\bar \pi_{M,t_i}(\phi) & \text{as} & 
	 \text{the particle filter }\bar \pi^{N,\gamma_1}_{M,t_i}(\phi),~ \bar \pi^{N,\gamma_1,\tau}_{M,t_i}(\phi), ~\text{or} ~\bar \pi^{N,\gamma_1,H}_{M,t_i}(\phi), \\
	 \hat \pi_{M,t_i}(\phi) &\text{as} & \text{the estimate of the function $\phi$ based on resampled particles} \\
	 && (\hat \pi_{M,t_i}(\phi) \triangleq \frac{1}{M}{\sum_{j=1}^{M} \phi(\hat \kappa_j(t_i), \bar x_j(t_i))}),\\
	 \tilde \pi_{M,t_i}(\phi) &\text{as} &\text{the estimate of the function $\phi$ based on artificially perturbed} \\
	 && \text{particles } (\tilde \pi_{M,t_i}(\phi) \triangleq \frac{1}{M}\sum_{j=1}^{M}  \phi(\bar \kappa_j(t_i), \bar x_j(t_i))),\\
	 \pi_{t_{i+1}}(\phi) &\text{as} & \text{the exact filter $\pi^{N,\gamma_{1}}_{t_{i+1}}(\phi)$},\\
	 K_t & \text{as} & \text{the transition kernel of the applied dynamical system} \\ &&\text{(the CTMC, tau-leaping model, or PDMP)},\\
	 K^{N,\gamma_1}_t & \text{as} & \text{the transition kernel of the full model}\\
	 Y(t_i) &\text{as} & \text{the observation $Y^{N,\gamma_{1}}(t_{i})$},\\
	 \Delta t_i &\text{as} & t_{i+1}-t_{i}.
	\end{array}
\end{equation}
With these notations, we can decompose the error between the RPF $\bar \pi_{M,t_{i+1}} (\phi)$ and the exact filter $\pi_{t_{i+1}}(\phi)$ as follows.
\begin{align}\label{eq. bias caused by artificial evolution}
	\bar \pi_{M,t_{i+1}} (\phi) - \pi_{t_{i+1}}(\phi) 
	&= \bar \pi_{M,t_{i+1}} (\phi)  - \frac{\pi_{t_{i}}( K^{N,\gamma_1}_{\Delta t_i} \ell _{Y(t_{i+1})}\phi)}{\pi_{t_{i}}( K^{N,\gamma_1}_{\Delta t_i} \ell _{Y(t_{i+1})})} \notag\\
	&=\underbrace{\left(\bar \pi_{M,t_{i+1}} (\phi)- \frac{\tilde \pi_{M,t_{i}} ( K_{\Delta t_i} \ell _{Y(t_{i+1})}\phi)}{Z_{t}} \right)}_\text{the error caused by finite sampling}  \\
	&\quad+\frac{1}{Z_t} 
	\underbrace{\Big({\tilde \pi_{M,t_{i}} ( K_{\Delta t_i} \ell _{Y(t_{i+1})}\phi)} -
	{\hat \pi_{M,t_{i}} ( K_{\Delta t_i} \ell _{Y(t_{i+1})}\phi)} \Big)}_\text{the error caused by the artificial evolution}
	\notag\\
	&\quad+\frac{1}{Z_t}
		\underbrace{\Big(
		{\hat \pi_{M,t_{i}} ( K_{\Delta t_i} \ell _{Y(t_{i+1})}\phi)}
	- {\bar \pi_{M,t_{i}} ( K_{\Delta t_i} \ell _{Y(t_{i+1})}\phi)}
	\Big)}_\text{the error caused by resampling} \notag\\
	&\quad+\frac{1}{Z_t}
    \underbrace{\Big(
	{\bar \pi_{M,t_{i}} ( K_{\Delta t_i} \ell _{Y(t_{i+1})}\phi)}
	-  { \bar \pi_{M,t_{i}}( K^{N,\gamma_1}_{\Delta t_i} \ell _{Y(t_{i+1})}\phi)}
	\Big)}_\text{the error caused by model reduction} \notag  \\
		&\quad+\frac{1}{Z_t} 
	\underbrace{\Big(
	{ \bar \pi_{M,t_{i}} ( K^{N,\gamma_1}_{\Delta t_i} \ell _{Y(t_{i+1})}\phi)}
	-  {\pi_{t_{i}}( K^{N,\gamma_1}_{\Delta t_i} \ell _{Y(t_{i+1})}\phi)}
	\Big)}_\text{the error of the particle filter at $t_i$} \notag
\end{align}
where the first equality follows from the recursive formulas \eqref{eq. recurssive expression 1} and \eqref{eq. recurssive expression 2}, the term $\ell_{Y(t_{i+1})} $ is the likelihood function (see \eqref{eq. recurssive expression 2} and the text below), and $Z_{t}=\pi_{t_{i}}( K^{N,\gamma_1}_{\Delta t_i} \ell _{Y(t_{i+1})})$.
We note that for given observations, both the likelihood function and $Z_t$ are lower bounded by a positive constant due to the boundedness of $h(\cdot)$ (see \cref{subsection filtering problems}), so $\frac{1}{Z_t}$ is uniformly upper bounded and, therefore, not a concern in the error analysis.
\eqref{eq. bias caused by artificial evolution} tells that the error analysis of a particle filter can be performed in a recursive fashion where one analyzes the error of the filter at time $t_{i+1}$ after estimating the error at time $t_i$.
Therefore, the key to proving our result lies in investigating the first four terms on the right hand side of this equality, which corresponds to finite sampling, artificial evolution, resampling, and model reduction, respectively. 
We also note that the first and the third terms have already been analyzed in the classical literature \cite{del2000branching,crisan2001particle,bain2008fundamentals}, and the fourth can be analyzed by the convergence of the reduced model.
Consequently, the most challenging part is to analyze the second term, which corresponds to the error caused by the artificial evolution step.

Technically, we estimate the second moment of the error caused by artificial evolution.	
Notice that for any function $\phi$ satisfying \eqref{eq. sensitivity condition of phi}, we can find a positive constant $\tilde C_{\phi}$ such that
 \begin{align} \label{eq. tilde hat convergence}
&\mathbb E\left[ \left| \tilde \pi_{M,t_{i+1}}(\phi) - \hat \pi_{M,t_{i+1}}(\phi)\right|^2 \left| \mathcal Y_{t_{i+1}} \right. \right] \\
&= \mathbb E\left[ \left. \left|  \frac{1}{M}\sum_{j=1}^M \phi(\hat \kappa_j(t_{i+1}), \bar x_j(t_{i+1})) - \phi(\bar \kappa_j(t_{i+1}), \bar x_j(t_{i+1}))\right|^2 \right| \mathcal Y_{t_{i+1}}  \right] \notag\\
&\leq \frac{\tilde C^2_{\phi} }{M} \sum_{j=1}^{M} \mathbb E\left[  \|\bar \kappa_{j}(t_{i+1}) - \hat \kappa_{j}(t_{i+1}) \|^2_2 \left(1+\|\bar x_{j}(t_{i+1}) \|^q_2\right)^2 \big| \mathcal Y_{t_{i+1}}\right] \notag\\
&\leq \frac{\tilde C^2_{\phi}C^2_{\eta} \mathcal V}{M} \sum_{j=1}^{M}\frac{1}{M}\mathbb E\left[ \left(1+\|\bar x_{j}(t_{i+1})\|^q_2\right)^2\big| \mathcal Y_{t_{i+1}}  \right] \notag
\end{align}
where the third line follows from the midpoint theorem and the Jensen inequality, and the last line follows from \Cref{cond. conditions on the artificial noise 1}.
Therefore, we can analyze the effect of artificial evolution by checking whether $K_{\Delta t_i} \ell _{Y(t_{i+1})}\phi$ satisfies \eqref{eq. sensitivity condition of phi} and estimating the conditional moments of particles $\sum_{j=1}^{M}\frac{1}{M}\mathbb E\left[ \left(1+\|\bar x_{j}(t_{i+1})\|^q_2\right)^2\big| \mathcal Y_{t_{i+1}}  \right]$, of which the former corresponds to the parameter sensitivity analysis of the transition kernel. 

Following this idea, in \Cref{subsection proof parameters}, we use the parameter sensitivity analysis to show that the second term on the right hand side of \eqref{eq. bias caused by artificial evolution} is of order $\frac{1}{\sqrt{M}}$.
Specifically, we first prove in \Cref{subsection proof parameters} that both the state vector and the parameter sensitivity for the transition kernel grow mildly under the assumed conditions (\cref{proposition finiteness of moments} and \cref{proposition Kphi}). 
Then, based on these results and \eqref{eq. tilde hat convergence}, we can find a $\mathcal Y_{t_{i+1}}$ measurable random variable $\tilde C_{t_{i+1}}$ such that
    \begin{align*}
\mathbb E\left[ 
\big| \text{the second term on the right hand side of }\eqref{eq. bias caused by artificial evolution} \big|^2 \big| \mathcal Y_{t_{i+1}}  \right]  
\leq \frac{\tilde C_{t_{i+1}}}{M},
\end{align*}
	which indicates that the error caused by artificial evolution is of order $\frac{1}{\sqrt{M}}$ (see \Cref{remark bias}).
	Finally, based on the obtained results, we prove the convergence of the established RPF  by mathematical induction in \Cref{subsection proof main result}.

\subsection{Parameter sensitivities for transition kernels and the analysis of the error introduced by artificial evolution} \label{subsection proof parameters}
In this subsection, we use the parameter sensitivity analysis to estimate the error introduced by the artificial evolution and show that it vanishes as the particle population goes to infinity. 
Recall that the key is to show  $K_{\Delta t_i} \ell _{Y(t_{i+1})}\phi$ satisfying \eqref{eq. sensitivity condition of phi} and moments of particles being bounded from above (see \eqref{eq. tilde hat convergence}).

First, \Cref{assumption for full model} and \Cref{assumption non-negativity of tau-leaping algorithm} suggest that any moment of the state vector is bounded, and, therefore, the moments of particles are also bounded.

\begin{proposition}[Adapted from {\cite[Lemma 5.1]{gupta2013unbiased}} and {\cite[Theorem 4.5]{rathinam2016convergence}}]\label{proposition finiteness of moments}
	If \Cref{assumption for full model} holds, then for all $p>0$, there exists a positive constant $C_p$ such that 
	\begin{align*}
	\mathbb E\left[ \left. \left\|X^{N,\gamma_1}(t)\right\|^{p}_2 ~\right| \mathcal K= \kappa, X^{N,\gamma_1}(0)=x \right] &\leq 
	\left( \left\|x\right\|^{p}_2
	+ C_p t \right) \text{e}^{C_p t}\\
	\mathbb E\left[ \left. \left\|X^{\gamma_1}(t)\right\|^{p}_2 ~\right| \mathcal K = \kappa, X^{\gamma_1}(0)=x \right] &\leq 
	\left( \left\|x\right\|^{p}_2
	+ C_p t \right) \text{e}^{C_p t}
	\end{align*}
	for any $t>0$ and $(\kappa,x) \in \Theta\times \mathbb R^{n}_{\geq 0}$.
	If moreover \Cref{assumption non-negativity of tau-leaping algorithm} holds, then for $p>0$, there exists a constant $C_{p,\tau}$ such that 
	\begin{align*}
	\sup_{t\in[0,T]} 
	\mathbb E \left[ \left. \left\| X^{N,\gamma_1}_{\tau}(t) \right\|_2^p \right|  \mathcal K=\kappa, X^{N,\gamma_1}_{\tau}(t)=x  \right]
	\leq 	\left( \left\|x\right\|^{p}_2
	+ 1\right) \text{e}^{C_{p,\tau} t }
	\end{align*} 
	for all $(\kappa,x) \in \Theta\times \mathbb R^{n}_{\geq 0}$.
\end{proposition}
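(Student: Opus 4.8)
The plan is to prove all three bounds by the same Foster--Lyapunov template, using the test function $V_p(x) \triangleq (1+\bar 1^\top x)^p$, which on the nonnegative orthant is comparable to $\|x\|_2^p$ since $\|x\|_2 \le \bar 1^\top x \le \sqrt n\,\|x\|_2$. By Jensen's inequality it suffices to treat $p\ge 1$, as the bound for a general $p$ follows from the bound for any larger exponent via $\mathbb E[\|X\|_2^p]\le (\mathbb E[\|X\|_2^{p'}])^{p/p'}$. The crucial structural input is the last bullet of \Cref{assumption for full model}: every reaction $j$ that raises the total normalized population, i.e.\ with $\delta_j \triangleq \bar 1^\top\Lambda^N\zeta_j>0$, has a propensity growing at most linearly, $\lambda^N_j(\kappa,x)\le C_{\lambda,2}(1+\|x\|_2)$. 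Reactions with $\delta_j\le 0$ do not increase $\bar 1^\top x$, and since $V_p$ is increasing in that scalar they can only decrease $V_p$; they will be discarded when taking an upper bound.

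For the continuous-time Markov chain I would compute the extended generator $\mathcal A V_p(x)=\sum_j N^{\gamma_1+\rho_j}\lambda^N_j(\kappa,x)\bigl[V_p(x+\Lambda^N\zeta_j)-V_p(x)\bigr]$, which is well defined because \Cref{assumption for full model} keeps $x+\Lambda^N\zeta_j$ in the nonnegative orthant whenever $\lambda^N_j>0$. Splitting the sum by the sign of $\delta_j$ and dropping the nonpositive contributions of the population--non-increasing reactions, I would bound the remaining terms using the convexity estimate $(u+\delta_j)^p-u^p\le p\,\delta_j\,(u+\delta_j)^{p-1}$ with $u=1+\bar 1^\top x\ge 1$, together with the linear propensity bound and $1+\|x\|_2\le u$. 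Since each $\delta_j$ is a fixed constant for fixed $N$, this yields $\mathcal A V_p(x)\le C_p(1+V_p(x))$. A localization argument with the stopping times $\sigma_n=\inf\{t:\|X^{N,\gamma_1}(t)\|_2>n\}$ makes Dynkin's formula legitimate on the stopped process; letting $n\to\infty$ by monotone convergence gives $\mathbb E[V_p(X^{N,\gamma_1}(t))]\le V_p(x)+\int_0^t C_p(1+\mathbb E[V_p(X^{N,\gamma_1}(s))])\,\dd s$, and Gronwall's lemma produces the stated $(\|x\|_2^p+C_pt)\mathrm e^{C_pt}$ estimate, simultaneously certifying non-explosion. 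The PDMP bound follows from the identical argument applied to the piecewise-deterministic generator: the drift part contributes $\sum_{j:\gamma_1+\rho_j>0}\lambda'_j\,p(1+\bar 1^\top x)^{p-1}\bar 1^\top D^{\gamma_1+\rho_j}\zeta_j$ and the jump part the analogous difference over slow reactions; the same sign dichotomy, and the fact that the limiting propensities $\lambda'_j=\lim_N\lambda^N_j$ inherit the linear bound for population-increasing reactions, give $\mathcal A V_p\le C_p(1+V_p)$, while the non-explosivity assumed in \Cref{prop Kang kurtz gamma1} again justifies Dynkin and Gronwall.

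For the tau-leaping model the argument is discrete in time. Over one leap $[\tau_i,\tau_{i+1})$ the propensities are frozen at $x=X^{N,\gamma_1}_\tau(\tau_i)$, so the increments $\mathcal P_{j,i}$ are independent Poisson variables with means $(\tau_{i+1}-\tau_i)N^{\gamma_1+\rho_j}\lambda^N_j(\kappa,x)$. I would establish a one-step inequality $\mathbb E[V_p(X^{N,\gamma_1}_\tau(\tau_{i+1}))\mid X^{N,\gamma_1}_\tau(\tau_i)=x]\le (1+C_p(\tau_{i+1}-\tau_i))V_p(x)+C_p(\tau_{i+1}-\tau_i)$ by expanding $V_p(x+\sum_j\Lambda^N\zeta_j\mathcal P_{j,i})$, taking expectations, and invoking the Poisson moment identities together with the linear bound on the population-increasing rates and \Cref{assumption non-negativity of tau-leaping algorithm} (which keeps the state nonnegative so that $\bar 1^\top X^{N,\gamma_1}_\tau\ge 0$). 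Iterating this recursion over the grid gives $\prod_i(1+C_p(\tau_{i+1}-\tau_i))\le\mathrm e^{C_p t}$, uniform in the mesh, yielding the claimed $(\|x\|_2^p+1)\mathrm e^{C_{p,\tau}t}$ bound; the same computation on a partial interval controls $t$ between grid points and hence the supremum over $[0,T]$.

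The main obstacle will be the tau-leaping estimate, and I expect to lean on \cite{rathinam2016convergence} here. Unlike the continuous-time chain, tau-leaping has no instantaneous self-correction within a leap: the population-increasing reactions fire at rates fixed at the start of the step, and the jumps $\mathcal P_{j,i}$ are unbounded, so the higher Poisson moments contribute polynomial-in-$\|x\|$ terms that must be shown neither to destabilize the one-step recursion nor to introduce a constant that blows up as $|\tau|\to 0$. Verifying that the linear growth of the increasing propensities is exactly enough to close the discrete Gronwall argument uniformly in the mesh is the delicate point; the nonnegativity hypothesis is precisely what prevents the frozen propensities from being evaluated outside the admissible region and is therefore essential to this step.
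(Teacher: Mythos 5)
Your proposal is correct in outline, but it is worth noting that the paper does not actually prove this proposition: its ``proof'' is a one-line citation, importing the CTMC and PDMP bounds from Lemma 5.1 of Gupta--Khammash and the tau-leaping bound from Theorem 4.5 of Rathinam. What you have written is essentially a reconstruction of the arguments inside those references --- a Foster--Lyapunov estimate for $V_p(x)=(1+\bar 1^{\top}x)^p$ driven by the sign dichotomy on $\bar 1^{\top}\Lambda^N\zeta_j$ and the linear bound on population-increasing propensities, closed by localization, Dynkin, and Gronwall in continuous time and by a one-step recursion plus Poisson moment bounds (where the key point is that $\mu^p\le h^p u^p\le h\,u^p$ for small leaps) in the tau-leaping case. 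Two small caveats. First, the reduction to $p\ge 1$ via Jensen yields a bound of the form $\bigl(\|x\|_2^p+(C_{p'}t)^{p/p'}\bigr)\mathrm{e}^{C_{p'}tp/p'}$, and $(C_{p'}t)^{p/p'}$ is not $O(t)$ as $t\to 0$, so you do not literally recover the stated $(\|x\|_2^p+C_pt)\mathrm{e}^{C_pt}$ form; either run the Lyapunov argument directly for all $p>0$ or accept the (harmless for the paper's purposes) change in the constant structure. Second, since the same constant $C_p$ is asserted for $X^{N,\gamma_1}$ and its PDMP limit $X^{\gamma_1}$, your generator bound must be uniform in $N$, which requires observing that $N^{\gamma_1+\rho_j}\,\bar 1^{\top}\Lambda^N\zeta_j$ stays bounded as $N\to\infty$; this follows from the multiscale scaling convention but deserves an explicit line.
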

\begin{proof}
	The results for the full model and the PDMP are adapted from {\cite[Lemma 5.1]{gupta2013unbiased}}; the result for the tau-leaping method is adapted from {\cite[Theorem 4.5]{rathinam2016convergence}}.
\end{proof}

Then, we look at the parameter sensitivity for the CTMC, tau-leaping model, and PDMP. 
Given an initial condition $(\kappa,x)$ and a time point $t$, we term
\begin{align*}
	\Psi^{N}_{t,g}(\kappa,x) &= \mathbb E \left[ g\left(X^{N,\gamma_1}(t)\right) \left|~\mathcal K\triangleq \kappa,  X^{N,\gamma_1}(0)=x\right.\right],  \\
	\Psi^{N,\tau}_{t,g}(\kappa,x) &= \mathbb E \left[ g\left(X^{N,\gamma_1}_{\tau}(t)\right) \left|~\mathcal K\triangleq \kappa,  X^{N}(0)=x\right.\right], \\
	\Psi_{t,g}(\kappa,x) &= \mathbb E \left[ g\left(X^{\gamma_1}(t)\right) \left|~\mathcal K\triangleq \kappa,  X^{N,\gamma_1}(0)=x\right.\right]  
\end{align*}
as the conditional expectations of function $g$ for the CTMC, tau-leaping model, and PDMP, respectively.
Then, the parameter sensitivity for these models are defined, respectively, by $\frac{\partial \Psi^N_{t,g}(\kappa,x) }{\partial \kappa}$, $\frac{\partial \Psi^{N,\tau}_{t,g}(\kappa,x) }{\partial \kappa}$, and $\frac{\partial \Psi_{t,g}(\kappa,x) }{\partial \kappa}$.
The literature \cite{gupta2018estimation} shows that the parameter sensitivity for the full model can be computed by 
\begin{small}
\begin{align}
&\frac{\partial \Psi^N_{t,g}(\kappa,x) }{\partial \kappa} \label{eq. parameter sensitivity for SCRNs} \\
&= \sum_{j=1}^{r} \mathbb{E}
\left[
\left. 
\int_{0}^{t} \frac{\partial \lambda^{N}_{j}\left(\kappa, X^{N,\gamma_1}(s)\right)}{\partial \kappa} 
\Delta^N_j \Psi^N_{t-s,g}\left(\kappa,X^{N,\gamma_1}(s)\right) 
\dd s
\right| ~\mathcal K=\kappa,  X^{N}(0)=x 
\right] \notag
\end{align}
\end{small}where $\Delta^N_j \Psi^N_{t-s,g}\left(\kappa,x\right)\triangleq \Psi^N_{t-s,g}\left(\kappa,x+\Lambda^N \zeta_j\right)- \Psi^N_{t-s,g}\left(\kappa,x\right)$.
Similarly, by mathematical induction, we can show that the parameter sensitivity for the tau-leaping model (with a deterministic time-discretization scheme) can be computed by
\footnote{Note that $\frac{\partial \Psi^{N,\tau}_{t,g}(\kappa,x) }{\partial \kappa}$ computes the parameter sensitivity when the tau-leaping model is viewed as the ground truth. It is not necessarily equal or even close to $\frac{\partial \Psi^N_{t,g}(\kappa,x) }{\partial \kappa}$, the parameter sensitivity for the full model.}
\begin{small}
\begin{align}
&\frac{\partial \Psi^{N,\tau}_{t,g}(\kappa,x) }{\partial \kappa} \label{eq parameter sensitivity for the tau-leaping method} \\
&= \sum_{j=1}^{r} \mathbb{E}
\Bigg[
\sum_{\tau_i<t}  (\tau_{i+1}\wedge t -\tau_i)\frac{\partial \lambda^{N}_{j}\left(\kappa, X^{N,\gamma_1}_\tau (\tau_i)\right)}{\partial \kappa} 
\Delta^N_j \Psi^N_{t-s,g}\left(\kappa,X^{N,\gamma_1}_{\tau}(\tau_{i+1})\right) \notag \\
& \qquad\qquad\qquad\qquad\qquad\qquad\qquad\qquad\qquad\qquad\qquad\qquad
\bigg| ~\mathcal K=\kappa,  X^{N,\gamma_1}_{\tau}(0)=x 
\bigg]. \notag
\end{align}
\end{small}Finally, the literature \cite{gupta2019sensitivity} tells that under \Cref{assumption convergence of the initial condition}, \Cref{assumption for full model}, and \Cref{assumption for PDMPs}, the parameter sensitivity for the PDMP satisfies
\begin{equation}\label{eq. parameter sensitivity PDMP}
	\frac{\partial \Psi_{t,g}(\kappa,x) }{\partial \kappa} = \lim_{N\to\infty} \frac{\partial \Psi^N_{t,g}(\kappa,x) }{\partial \kappa},
\end{equation}
if $g$ is bounded, continuously differentiable.
Using these results, we can further prove that the parameter sensitivity for the transition kernel is at most polynomially growing with respect to the state argument (see \Cref{proposition Kphi}).

\begin{proposition}(Polynomially growing rate of the parameter sensitivity for the transition kernel)\label{proposition Kphi}
	For any measurable function $\phi$, we denote 
	\begin{align*}
		\left(K^{N,\gamma_1}_t \phi \right)(\kappa, x) &\triangleq \mathbb E_{\mathbb P} \left[ \phi\left(\mathcal K, X^{N,\gamma_1}(t)\right) \left|~\mathcal K=\kappa,  X^{N}(0)=x\right.\right], \\
		\left(K^{N,\gamma_1,\tau}_t \phi \right)(\kappa, x) &\triangleq \mathbb E_{\mathbb P} \left[ \phi\left(\mathcal K, X^{N,\gamma_1}_{\tau}(t)\right) \left|~\mathcal K=\kappa,  X^{N}(0)=x\right.\right], \\
		\left(K^{\gamma_1}_t \phi \right)(\kappa, x) &\triangleq \mathbb E_{\mathbb P} \left[ \phi\left(\mathcal K, X^{\gamma_1}(t)\right) \left|~\mathcal K=\kappa,  X^{\gamma_1}(0)=x\right.\right].
	\end{align*}
	If $\phi:\Theta\times \mathbb R^{n}_{\geq 0} \to \mathbb R$ 
	satisfies the requirement in \cref{thm main}, then the following results hold.
	\begin{itemize}
		\item If \Cref{assumption for full model} holds, then for any $t>0$ there exists a constant $C_{t,\phi}$ such that $\left\| \frac{ \partial \left(K^{N,\gamma_1}_t \phi \right)(\kappa, x) }{\partial \kappa}\right\|_{2} 
		\leq C_{t,\phi} \left(1+ \|x\|^q_2\right)$ for all $ (\kappa,x) \in \Theta\times \mathbb R^{n}_{\geq 0}$.
		\item If \Cref{assumption non-negativity of tau-leaping algorithm} and \Cref{assumption for full model} hold, then for any $t>0$ there exists a constant $C_{t,\phi,\tau}$ such that $\left\| \frac{ \partial \left(K^{N,\gamma_1,\tau}_t \phi \right)(\kappa, x) }{\partial \kappa}\right\|_{2} 
		\leq C_{t,\phi,\tau} \left(1+ \|x\|^q_2\right)$ for all $ (\kappa,x) \in \Theta\times \mathbb R^{n}_{\geq 0}$.
		\item If \Cref{assumption convergence of the initial condition}, \Cref{assumption for full model}, and \Cref{assumption for PDMPs} hold, then for any $t>0$ and $ (\kappa,x) \in \Theta\times \mathbb R^{n}_{\geq 0}$ there holds  $	\left\| \frac{ \partial \left(K^{\gamma_1}_t \phi \right)(\kappa, x) }{\partial \kappa}\right\|_{2} \leq C_{t,\phi} \left(1+ \|x\|^q_2\right)$,
		where $C_{t,\phi}$ is the same as that in the first result.
	\end{itemize}
\end{proposition}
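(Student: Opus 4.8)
The plan is to bound $\partial_\kappa\big(K^{N,\gamma_1}_t\phi\big)(\kappa,x)$ by separating the two ways in which $\kappa$ enters $\big(K^{N,\gamma_1}_t\phi\big)(\kappa,x)=\mathbb E[\phi(\kappa,X^{N,\gamma_1}(t))\mid \mathcal K=\kappa,X^{N}(0)=x]$: it sits in the first argument of $\phi$, and it governs the law of the trajectory. Writing $F(\kappa,\kappa')=\mathbb E[\phi(\kappa',X^{N,\gamma_1}(t))\mid \mathcal K=\kappa,X^{N}(0)=x]$ so that $\big(K^{N,\gamma_1}_t\phi\big)(\kappa,x)=F(\kappa,\kappa)$, the chain rule gives
\[
\frac{\partial \big(K^{N,\gamma_1}_t\phi\big)(\kappa,x)}{\partial\kappa}=\underbrace{\frac{\partial F}{\partial\kappa}(\kappa,\kappa)}_{\text{dynamics part}}+\underbrace{\frac{\partial F}{\partial\kappa'}(\kappa,\kappa)}_{\text{direct part}},
\]
where the dynamics part is exactly $\partial_\kappa\Psi^N_{t,g}$ for the frozen integrand $g=\phi(\kappa,\cdot)$, and the direct part equals $\mathbb E[\partial_\kappa\phi(\kappa,X^{N,\gamma_1}(t))]$. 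The same splitting will be used verbatim for the tau-leaping kernel and the PDMP kernel.

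For the full model I would dispatch the direct part immediately: the growth bound \eqref{eq. sensitivity condition of phi} together with \cref{proposition finiteness of moments} yields
\[
\Big\|\mathbb E\big[\partial_\kappa\phi(\kappa,X^{N,\gamma_1}(t))\big]\Big\|_2\le C_\phi\,\mathbb E\big[1+\|X^{N,\gamma_1}(t)\|_2^q\big]\le C_\phi\big(1+(\|x\|_2^q+C_qt)\,e^{C_qt}\big),
\]
which already has the required shape. For the dynamics part I would insert $g=\phi(\kappa,\cdot)$ into the sensitivity identity \eqref{eq. parameter sensitivity for SCRNs}; since $\phi$ is bounded, every finite difference obeys $|\Delta^N_j\Psi^N_{t-s,g}|\le 2\|\phi\|_\infty$, so the integrand is dominated by $2\|\phi\|_\infty\,\|\partial_\kappa\lambda^N_j\|_2$. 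Bounding $\|\partial_\kappa\lambda^N_j\|_2\le C_{\lambda,1}(1+\|x\|_2^q)$ via \cref{assumption for full model} and again invoking the moment estimate inside the time integral produces a bound of the form $C_{t,\phi}(1+\|x\|_2^q)$. Throughout I would track that every constant depends only on $C_\phi$, $\|\phi\|_\infty$, $C_{\lambda,1}$, $r$, $q$, $t$, and the moment constant $C_q$ — and in particular \emph{not} on $N$, which is the feature that makes the PDMP bullet work. The tau-leaping case is entirely analogous, starting from the discrete identity \eqref{eq parameter sensitivity for the tau-leaping method}: the weights $(\tau_{i+1}\wedge t-\tau_i)$ telescope to at most $t$, the propensity derivative and the finite difference are bounded as above, and the moments $\mathbb E[\|X^{N,\gamma_1}_\tau(\tau_i)\|_2^q]$ are uniformly controlled by the tau-leaping estimate in \cref{proposition finiteness of moments} (which uses \cref{assumption non-negativity of tau-leaping algorithm}), giving the constant $C_{t,\phi,\tau}$.

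The PDMP case is the delicate one, and I expect it to be the main obstacle. The cleanest route is to use the limit identity \eqref{eq. parameter sensitivity PDMP}, which represents the PDMP dynamics sensitivity as $\lim_{N\to\infty}\partial_\kappa\Psi^N_{t,g}$: because the full-model dynamics bound established above is uniform in $N$, the limit inherits the \emph{same} constant $C_{t,\phi}$, while the direct part is handled by the PDMP moment estimate of \cref{proposition finiteness of moments}, which carries the identical $C_q$ — this is exactly why the third bullet can claim the same $C_{t,\phi}$ as the first. The catch is that \eqref{eq. parameter sensitivity PDMP} is stated for $g$ bounded \emph{and continuously differentiable}, whereas the frozen integrand $g=\phi(\kappa,\cdot)$ is, by the hypotheses of \cref{thm main}, only continuous in the state variable. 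I would close this gap by a mollification argument: approximate $\phi(\kappa,\cdot)$ uniformly by smooth, uniformly bounded $g_\varepsilon$, apply \eqref{eq. parameter sensitivity PDMP} and the uniform-in-$N$ bound to each $g_\varepsilon$, and pass to the limit $\varepsilon\to0$, exploiting that the derived bound depends on $g$ only through $\|\phi\|_\infty$ and is thus stable under uniform approximation. Verifying that this passage to the limit is legitimate — that the PDMP sensitivities converge as the test function is smoothed, under \cref{assumption convergence of the initial condition}, \cref{assumption for full model}, and \cref{assumption for PDMPs} — is the technical heart of the PDMP claim, after which summing the direct and dynamics parts completes the proof.
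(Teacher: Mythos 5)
Your proof follows essentially the same route as the paper's: the same chain-rule split into a direct part (bounded via \eqref{eq. sensitivity condition of phi} and the moment estimates of \cref{proposition finiteness of moments}) and a dynamics part (bounded via the sensitivity identities \eqref{eq. parameter sensitivity for SCRNs}, \eqref{eq parameter sensitivity for the tau-leaping method}, and \eqref{eq. parameter sensitivity PDMP}, using $|\Delta^N_j\Psi^N_{t-s,g}|\le 2\|\phi\|_\infty$ together with the polynomial growth of $\partial_\kappa\lambda^N_j$ and the $N$-uniform moment bounds), with the PDMP bullet inheriting the same constant $C_{t,\phi}$ through the $N\to\infty$ limit. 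The only divergence is that you flag, and propose to mollify away, the smoothness-in-the-state-variable requirement attached to \eqref{eq. parameter sensitivity PDMP}, a point the paper's proof passes over silently; that extra care is reasonable but does not alter the structure of the argument.
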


\begin{proof}
    We first prove the result for the transition kernel of the full model.
    By definition, we can write the derivative of the transition kernel by 
    \begin{align}
    \frac{ \partial K^{N,\gamma_1}_t \phi (\kappa, x) }{\partial \kappa} \notag
    =& \mathbb E \left[ \left. \frac{\partial \phi\left(\kappa, X^{N,\gamma_1}(t)\right)}{\partial \kappa} \right|~\mathcal K=\kappa,  X^{\gamma_1}(0)=x \right]
    +
    \left.\frac{\partial \Psi^{N}_{t,\phi(\kappa, \cdot)}(\theta,x) }{\partial \theta}\right|_{\theta=\kappa}. 
    \end{align}
    By \cref{eq. sensitivity condition of phi} and \cref{proposition finiteness of moments}, we can find a positive constant $C^1_{t,\phi}$ such that 
    \begin{equation}\label{eq. sensitivity analysis help 1}
    \mathbb E \left[ \left. \left\| \frac{\partial \phi\left(\kappa, X^{N,\gamma_1}(t)\right)}{\partial \kappa}\right\|_2 \right|~\mathcal K=\kappa,  X^{\gamma_1}(0)=x \right]
    \leq C^1_{t,\phi}\left( 1
    + \|x\|^q_2 \right)
    \end{equation}
    for all $(\kappa,x) \in \Theta\times \mathbb R^{n}_{\geq 0}$.
    Also, by \eqref{eq. parameter sensitivity for SCRNs} and the boundedness of $\phi$, there holds
    \begin{align}
    \left\| \left.\frac{\partial \Psi^{N}_{t,\phi(\kappa, \cdot)}(\theta,x) }{\partial \theta}\right|_{\theta=\kappa}  \right\|_2
    &\leq  2\|\phi\|_{\infty}\sum_{j=1}^{r} \left\|\mathbb{E}
    \left[
    \left. 
    \int_{0}^{t} \frac{\partial \lambda^{N}_{j}\left(\kappa, X^{N,\gamma_1}(s)\right)}{\partial \kappa} 
    \dd s
    \right| ~\mathcal K=\kappa,  X^{N}(0)=x 
    \right]\right\|_2 \notag \\
    & \leq  2\|\phi\|_{\infty}\sum_{j=1}^{r} \mathbb{E}
    \left[
    \left. 
    \int_{0}^{t} \left\| \frac{\partial \lambda^{N}_{j}\left(\kappa, X^{N,\gamma_1}(s)\right)}{\partial \kappa} \right\|_2
    \dd s
    \right| ~\mathcal K=\kappa,  X^{N}(0)=x 
    \right] \notag 
    \end{align}
    where the second inequality follows from Jensen's inequality and the triangle inequality.
    Therefore by \Cref{proposition finiteness of moments} and \Cref{assumption for full model}, we can find a constant $C^2_{t,\phi}$ such that 
    \begin{equation}\label{eq. sensitivity analysis help 2}
    \left\| \left.\frac{\partial \Psi^{N}_{t,\phi(\kappa, \cdot)}(\theta,x) }{\partial \theta}\right|_{\theta=\kappa}\right\|_2
    \leq  C^2_{t,\phi}\left( 1
    + \|x\|^q_2 \right)
    \qquad \forall (\kappa,x) \in \Theta\times \mathbb R^{n}_{\geq 0}.
    \end{equation}
    Finally, by combing \eqref{eq. sensitivity analysis help 1} and \eqref{eq. sensitivity analysis help 2}, we prove the result for the transition kernel of the full model (the first result).
    
    By the same argument, we can also prove the second result. 
    For the third result, we can learn from \eqref{eq. parameter sensitivity PDMP} that $\Psi_{t,\phi(\kappa, \cdot)}(\theta,x)$ satisfies \eqref{eq. sensitivity analysis help 2}.
    Moreover, by \cref{eq. sensitivity condition of phi} and \cref{proposition finiteness of moments}, we can show that the quantity $\phi\left(\kappa, X^{\gamma_1}(t)\right)$ also satisfies \eqref{eq. sensitivity analysis help 1}.
    Therefore, the third result is proven. 
\end{proof}

\begin{remark}[Analysis of the error introduced by artificial evolution]{\label{remark bias}}
	Now, we use the obtained propositions to show that the second term on the right hand side of \eqref{eq. bias caused by artificial evolution} is of the order $1/\sqrt{M}$.
    First, we look at the conditional moments of the particles in the RPF.
    We note that every weight $w_{j}(t_i)$ ($j=1,\dots,M$, $i=1,2,\dots,$) is upper bounded by the constant $\frac{1}{M}\exp\left(  \left\|Y(t_i)\right\|^2_2+ m\|h\|^2_{\infty}\right)$, and the resampling does not change the expectation of the empirical mean of these weighted particles. 
    These facts together with the third claim of \cref{assumption for full model} and  \cref{proposition finiteness of moments} suggests that given the observation process, the second conditional moment of the particles is bounded from above, i.e., 
    \begin{align}\label{eq. boundedness of particles' moments}
    	\frac{1}{M} \sum_{j=1}^n \mathbb E\left[ \left(1+\|\bar x_{j}(t_{i+1})\|^q_2\right)^2\big| \mathcal Y_{t_i+1} \right] \leq \tilde C_{Y(t_{i+1})} 
    \end{align}
    where $\tilde C_{Y(t_{i+1})}$ is a $\mathcal Y_{t_i+1} $ measurable random variable.
    Second, we note that  if $\phi$ satisfies the requirement in  \Cref{thm main}, the function $\ell_{y} \phi$ (for any $y\in\mathbb R^{m}$) also satisfies this requirement, because the likelihood function $\ell_{y}$ is bounded and has no dependence on the argument $\kappa$.
    Consequently, by \cref{proposition Kphi}, the function $K_{\Delta t_i} \ell _{Y(t_{i+1})}\phi$ also satisfies the requirement in  \Cref{thm main} under the proposed conditions.
    Finally, applying these results to \eqref{eq. tilde hat convergence}, we can find a $\mathcal Y_{t_i+1} $ measurable random variable $\tilde C_{t_{i+1}}$ such that 
     \begin{align}\label{eq error analysis the second term}
    \mathbb E\left[ 
    \big| \text{the 2nd term on the right hand side of }\eqref{eq. bias caused by artificial evolution} \big|^2 \big| \mathcal Y_{t_i+1}  \right]  
    \leq \frac{\tilde C_{t_{i+1}}}{M},
    \end{align}
    which suggests that the bias contributed by artificial evolution is of order $\frac{1}{\sqrt{M}}$.

\end{remark}

\subsection{Proving the main result by mathematical induction} {\label{subsection proof main result}}
In this subsection, we prove \cref{thm main} using the results obtained in the previous section.

\begin{proof}[The Proof of \cref{thm main}]
   For simplicity, we use the simplified notations in \eqref{eq. notations} to prove the result.
   Moreover, we introduce another quantity $a$ to indicate the fidelity of the reduced model.
   Specifically, if the RPF uses the tau-leaping model, then $a=|\tau|$; 
   if the RPF uses the PDMP model, then $a=1/N$;
   if the RPF uses the full model, then $a\equiv 0$.
   By the dominated convergence theorem, to prove the result, we only need to show 
   \begin{align}\label{eq. alternative quantity}
   	  \lim_{M\to\infty} \lim_{a\to\infty}  \mathbb E\left[ 	\left(\bar \pi_{M,t_{i}} (\phi) - \pi_{t_{i}}(\phi)\right)^2 \big| \mathcal Y_{t_{i}}  \right] = 0
   \end{align}	
   for any $i\in\mathbb N_{\geq 0}$ and any $\phi$ satisfying the requirement in \cref{thm main}.
   Therefore, in the sequel, we focus on analyzing the conditional error of the filters. 
   
   We use mathematical induction to prove \eqref{eq. alternative quantity}.
   For $i=1$, the relation \eqref{eq. alternative quantity} holds automatically by the law of large numbers and the convergence of the reduced models. 
   Then, we are going to show that if \eqref{eq. alternative quantity} holds for a positive integer $i$, then it also holds for the positive integer $i+1$.
   Note that we can decompose the error of the filter as \eqref{eq. bias caused by artificial evolution}; therefore, the key is to separately analyze the error caused by finite sampling, artificial evolution, resampling, and model reduction. 
   
   \textit{Error caused by model reduction:}
   Recall that for given observations, both the likelihood function and $Z_t$ in \eqref{eq. bias caused by artificial evolution} are lower bounded by a positive constant due to the boundedness of $h(\cdot)$ (see \cref{subsection filtering problems}), so $\frac{1}{Z_t}$ is uniformly upper bounded and, therefore, not a concern in the error analysis.
   Then, by the convergence of the reduced models (\cref{assumption convergence of tau-leaping algorithm} and \cref{prop Kang kurtz gamma1}) and the conditional dominated convergence theorem, we have that for any bounded continuous function $\phi$, there holds
   \begin{align}\label{eq. error 4th}
   	   \lim_{a\to0} \mathbb E\left[ \left( \text{the 4th term on the RHS of \eqref{eq. bias caused by artificial evolution}}\right) ^2 \big | \mathcal Y_{t_{i+1}}\right] =0.
   \end{align} 
   
   \textit{Error caused by resampling:}
   	If we use the multinomial resampling, then all the resampled particles are conditionally independently and identically distributed given the observation process, and, therefore,  we can conclude \cite{crisan2001particle}
   \begin{equation*}
   \mathbb E \left[ \left. \left| \hat\pi_{M,t_{i}}\left(K_{\Delta t_i} \ell _{Y(t_{i+1})}\phi\right) - \bar \pi_{M,t_{i+1}}\left(K_{\Delta t_i} \ell _{Y(t_{i+1})}\phi\right)\right|^2\right| \mathcal Y_{t_{i+1}} \right] \leq 
   \frac{ \|\ell _{Y(t_{i+1})}\phi\|^2_{\infty}}{M},
   \end{equation*}
   where $\|\ell _{Y(t_{i+1})}\phi\|_{\infty}$ is the uniform norm of the function $\ell _{Y(t_{i+1})}\phi$.
   Moreover, by \cite[Exerice 9.1]{bain2008fundamentals} which states that residual resampling introduces the minimum noise among all resampling methods, we can also conclude the above relation for residual resampling, and, therefore, 
   \begin{align}\label{eq. error 3rd}
    \lim_{M\to\infty} \lim_{a\to\infty} \mathbb E\left[ \left( \text{the 3rd term on the RHS of \eqref{eq. bias caused by artificial evolution}}\right) ^2 \big | \mathcal Y_{t_{i+1}}\right] =0,
   \end{align} 
   for any bounded $\phi$.
   
   \textit{Error caused by artificial evolution:} By \eqref{eq error analysis the second term} in \cref{remark artificial evolution}, we can conclude that for any $\phi$ satisfies the requirement in \Cref{thm main}, there holds
   \begin{align}\label{eq. error 2nd}
    \lim_{M\to\infty} \lim_{a\to\infty} \mathbb E\left[ \left( \text{the 2nd term on the RHS of \eqref{eq. bias caused by artificial evolution}}\right) ^2 \big | \mathcal Y_{t_{i+1}}\right] =0.
   \end{align} 
   
   \textit{Error caused by finite sampling:}
   For any $\phi$ satisfying the requirement in \cref{thm main}, the relations \eqref{eq. error 4th}, \eqref{eq. error 3rd}, \eqref{eq. error 2nd}, and \eqref{eq. alternative quantity} (for the integer $i$) imply
   \begin{align}\label{eq. error 1st help}
   	    \lim_{M\to\infty} \lim_{a\to\infty}  \mathbb E \left[ \left. \left| \tilde \pi_{M,t_{i}}\left(K_{\Delta t_i} \ell _{Y(t_{i+1})}\phi\right) -  \pi_{M,t_{i+1}}\left(K^{N,\gamma_1}_{\Delta t_i} \ell _{Y(t_{i+1})}\phi\right)\right|^2\right| \mathcal Y_{t_{i+1}} \right]=0.
   \end{align}
   Therefore, to analyze the error caused by finite sampling, we only need to compare $\bar \pi_{M,t_{i}}\left(\phi\right) $ and $\frac{\tilde  \pi_{M,t_{i+1}}\left(K_{\Delta t_i} \ell _{Y(t_{i+1})}\phi\right)}{\tilde \pi_{M,t_{i+1}}\left(K_{\Delta t_i} \ell _{Y(t_{i+1})}\right)}$.
   	For any function $\phi$ satisfying the requirement in \Cref{thm main},  we define
   	\begin{align*}
   	\bar \rho_{M,t_{i+1}} (\phi)  = {\sum_{j=1}^{M} \phi(\kappa_j(t_{i+1}), x_j(t_{i+1}))} \qquad \forall i\in\mathbb N,
   	\end{align*} 
   	which is the estimate of the function $\phi$ right before the collection of $Y(t_{i+1})$.
   	Since the increments of the particles in the sampling step are independent, this quantity satisfies
   	\begin{align}\label{eq. estimate bar rho}
   	&\mathbb E \left[ \left. \left| \bar \rho_{M,t_i+1}(\phi) - \tilde \pi_{M,t_{i+1}}\left(K^{N,\gamma_1}_{\Delta t_i} \ell _{Y(t_{i+1})}\phi\right) \right|^2 \right| \mathcal Y_{t_{i+1}} \right]  \leq \frac{\|\ell _{Y(t_{i+1})}\phi\|^2_{\infty}}{M} .
   	\end{align}
   	Moreover, by \cite[(2.10)]{crisan2001particle}, there is
   	\begin{align*}
   	&\left|\bar\pi_{M,t_{i+1}}(\phi) - \frac{\tilde  \pi_{M,t_{i+1}}\left(K_{\Delta t_i} \ell _{Y(t_{i+1})}\phi\right)}{\tilde \pi_{M,t_{i+1}}\left(K_{\Delta t_i} \ell _{Y(t_{i+1})}\right)}\right| \\
   	&\leq
   	\frac{\left\|\phi\right\|_{\infty}}{\tilde \pi_{M,t_{i+1}}\left(K_{\Delta t_i} \ell _{Y(t_{i+1})}\right)}
   	\left|\bar  \rho_{M,t_i+1}\left( \ell_{y_{i+1}}\right)
   	- 
   	\tilde \pi_{M,t_{i+1}}\left(K_{\Delta t_i} \ell _{Y(t_{i+1})}\right)\right| \\
   	& \quad~ +
   	\frac{1}{\tilde \pi_{M,t_{i+1}}\left(K_{\Delta t_i} \ell _{Y(t_{i+1})}\right)}
   	\left|\bar \rho_{M,t_{i+1}}\left( \ell_{y_{i+1}}\phi\right) 
   	- 
   	\tilde \pi_{M,t_{i+1}}\left(K_{\Delta t_i} \ell _{Y(t_{i+1})}\phi\right)\right|.
   	\end{align*}
   	Recall that the likelihood function $\ell _{Y(t_{i+1})}$ has a $\mathcal Y_{t_{i+1}}$ measurable, positive lower bound due to the boundedness of $h(\cdot)$, so $\tilde \pi_{M,t_{i+1}}\left(K_{\Delta t_i} \ell _{Y(t_{i+1})}\right)$ is also greater than or equal to this bound.
   	Thus, by applying \eqref{eq. estimate bar rho} to this inequality, we have
   	\begin{equation*}
   		\lim_{M\to\infty} \lim_{a\to0}\mathbb E \left[ 
   		\left(\bar\pi_{M,t_{i+1}}(\phi) - \frac{\tilde  \pi_{M,t_{i+1}}\left(K_{\Delta t_i} \ell _{Y(t_{i+1})}\phi\right)}{\tilde \pi_{M,t_{i+1}}\left(K_{\Delta t_i} \ell _{Y(t_{i+1})}\right)}\right)^2
   		\Bigg|
   		\mathcal Y_{t_{i+1}}
   		\right]
   		=0,
   	\end{equation*} 
   	which together with \eqref{eq. error 1st help} suggest the error caused by finite sampling is controlled, i.e., 
   	 \begin{align}\label{eq. error 1st}
   	\lim_{M\to\infty} \lim_{a\to\infty} \mathbb E\left[ \left( \text{the 1st term on the RHS of \eqref{eq. bias caused by artificial evolution}}\right) ^2 \big | \mathcal Y_{t_{i+1}}\right] =0.
   	\end{align} 
   	
   	Finally, by applying \eqref{eq. error 4th}, \eqref{eq. error 3rd}, \eqref{eq. error 2nd}, and \eqref{eq. error 1st} to \eqref{eq. bias caused by artificial evolution}, we can conclude that if \eqref{eq. alternative quantity} holds for a positive integer $i$, then it also holds for the positive integer $i+1$, which proves the result.
\end{proof}


\end{document}